\documentclass[aip,cha,reprint,amsmath,amssymb,nofootinbib,floatfix]{revtex4-2}
\usepackage{graphicx,bm,booktabs,amsthm}
\newtheorem{theorem}{Theorem}
\newtheorem{lemma}{Lemma}
\newtheorem{proposition}{Proposition}
\newtheorem{definition}{Definition}

\newcommand{\Z}{\mathbb{Z}}

\begin{document}

\title{Beyond 11/16: dense Kuramoto networks that resist synchrony}

\author{Kevin P. O'Keeffe}
\email{kevin.p.okeeffe@gmail.com}
\affiliation{Starling Research Institute, Seattle, WA 98112, USA}

\date{\today}

\keywords{Kuramoto model, synchronization, complex networks, interval arithmetic}

\begin{abstract}
How densely must a network of identical Kuramoto oscillators be wired before synchrony is its only
stable state? The answer is a critical connectivity $\mu_c$ (the minimum degree divided by $n-1$),
known to satisfy $11/16\le\mu_c<3/4$. We raise the lower bound from $11/16=0.6875$ to
$\mu_c\ge(4+2q^*)/(7+q^*)=0.6929077\ldots$, where $q^*\approx0.6506$ is a root of an explicit
polynomial of degree 8. The networks approaching this bound carry stable states that hug the
four-cluster state: three neighboring clusters each split into two slightly separated halves, and
the four crossed links between neighboring halves (those whose phase gap deviates most from a
quarter turn) have densities approaching $q^*$. We build these networks from
a coarse-grained cluster model, verify stability with interval arithmetic, and blow the model up
into large networks with an exact minimum degree. The states are fragile and easily missed by
numerical searches that impose stability margins.
\end{abstract}

\maketitle

\begin{quotation}
\noindent\textbf{Dense networks of identical oscillators like to synchronize. But how dense is dense
enough? For the Kuramoto model on a network in which every oscillator is coupled to a fraction
$\mu$ of the others, there is a critical connectivity $\mu_c$ above which sync is the only stable
state. For over a decade, researchers have squeezed $\mu_c$ from both sides. The best known networks
that avoid sync reached $\mu=11/16$, while the best proofs showed sync must win just below
$\mu=3/4$. Right at $3/4$ sits a special state: four equal clusters of oscillators, each a quarter
turn from the next, balanced on a razor's edge between stable and unstable. We push the lower side up
to $0.6929$ by splitting three of these clusters in two and pulling the halves apart by tiny amounts,
which tips the balance toward stability.}
\end{quotation}

\begin{table*}
\caption{Bounds on the critical connectivity $\mu_c$. Lower bounds (left) are constructions of networks
with stable nonsynchronous states; upper bounds (right) are theorems that sync must win.}
\label{tab:history}
\small
\begin{ruledtabular}
\setlength{\tabcolsep}{4pt}
\begin{tabular}{llp{0.23\textwidth}l|llp{0.23\textwidth}l}
\multicolumn{4}{c|}{lower bounds} & \multicolumn{4}{c}{upper bounds}\\
year & bound & method & ref. & year & bound & method & ref.\\
\hline
2012 & $\mu_c\ge0.6809$ & twisted states on rings & \onlinecite{Taylor2012,WSG2006} &
2012 & $\mu_c\le0.9395$ & necessary condition for stable fixed points & \onlinecite{Taylor2012}\\
2015 & $\mu_c\ge0.6818$ & lexicographic product of WSG and complete graphs & \onlinecite{CanaleMonzon2015} &
2019 & $\mu_c\le0.7929$ & landscape analysis & \onlinecite{Ling2019}\\
2020 & $\mu_c\ge0.6828$ & twinning-plus-adding on WSG networks & \onlinecite{TSS2020} &
2020 & $\mu_c\le0.7889$ & refined estimates & \onlinecite{LuSteinerberger2020}\\
2021 & $\mu_c\ge0.6838$ & optimal circulants & \onlinecite{Yoneda2021} &
2021 & $\mu_c\le3/4$ & order parameters $\rho_1,\rho_2$ & \onlinecite{KST2021}\\
2022 & $\mu_c\ge11/16$ & eight-cluster construction & \onlinecite{Canale2022} &
2026 & $\mu_c<3/4$ & nonlinear stability & \onlinecite{LZ2026}\\
this work & $\mu_c\ge0.6929077$ & split four-cluster states & & & & & \\
\end{tabular}
\end{ruledtabular}
\end{table*}

\section{Introduction}

Networks of identical, attractively coupled oscillators tend to synchronize. On the complete graph,
where every oscillator is coupled to every other, sync is the only stable state. On sparser networks
this can fail. Rings, for example, can trap the oscillators in twisted states, where the phases wind
once or more around the ring~\cite{WSG2006}. Which networks always synchronize? A natural guess is that
dense ones do, and the guess is right: if every oscillator is coupled to enough of the others, sync is
the only stable state~\cite{Taylor2012}. The question is how many is ``enough''.

To make this precise, consider the homogeneous Kuramoto model on a simple graph with adjacency
matrix $A$,
\begin{equation}
\dot\theta_j=\sum_{k=1}^{n}A_{jk}\sin(\theta_k-\theta_j),\qquad j=1,\dots,n.
\label{eq:kuramoto}
\end{equation}
Following Refs.~\onlinecite{Taylor2012,KST2021}, we measure the density of the network by its
\textit{connectivity}, $\mu=\delta/(n-1)$, where $\delta$ is the minimum degree. The critical
connectivity $\mu_c$ is the threshold above which every network has sync as its only stable
state.\footnote{Precisely, $\mu_c$ is the infimum of the numbers $\mu$ such that every graph with
connectivity at least $\mu$ has no stable nonsynchronous fixed point, i.e.\ no local minimum of the
potential~\eqref{eq:potential} other than sync. A stronger ``dynamical'' version asks for
convergence to sync from almost every initial condition~\cite{KST2021}; its threshold is at least
$\mu_c$. Our lower bound holds for both.}

The race to pin down $\mu_c$ is summarized in Table~\ref{tab:history}.
\begin{itemize}
\item \emph{Upper bounds} come from proofs that sync must win. Taylor showed $\mu_c\le0.9395$ in
2012~\cite{Taylor2012}. This was lowered in steps to $0.7929$~\cite{Ling2019}, then
$0.7889$~\cite{LuSteinerberger2020}, then $3/4$~\cite{KST2021}.
\item \emph{Lower bounds} come from networks with stable nonsynchronous states. Twisted states on rings gave
$0.6809$~\cite{Taylor2012,WSG2006}, and circulant constructions then gave
$0.6818$~\cite{CanaleMonzon2015}, $0.6828$~\cite{TSS2020} and $0.6838$~\cite{Yoneda2021}. Canale
then broke out of the circulant family with an eight-cluster construction that reaches
$11/16=0.6875$~\cite{Canale2022}.
\end{itemize}

The upper bound $3/4$ has a special status. Kassabov, Strogatz and Townsend found networks with
connectivity near $3/4$ that sit on a ``razor's edge''~\cite{KST2021}. These networks carry a
four-cluster state whose linear stability is inconclusive, because its Jacobian has zero eigenvalues.
Linear analysis alone could therefore not settle whether $\mu_c<3/4$. Very recently, Lepsveridze and
Zhang settled it with a nonlinear argument: $\mu_c<3/4$ strictly, by a positive amount that their argument does not estimate~\cite{LZ2026}.\footnote{Their result is for the local-minimum version of $\mu_c$. For the
dynamical version, $3/4$ remains the best upper bound~\cite{LZ2026}. Their connectivity is
normalized as $(\delta+1)/n$; since $\delta/(n-1)<(\delta+1)/n$ whenever $\delta<n-1$, their strict bound
$\mu_c<3/4$ also holds in our normalization.}

Where in $[11/16,3/4)$ does $\mu_c$ actually lie? Our own numerical searches for dense stable states
(optimizing over cluster templates; Sec.~\ref{sec:discussion} and Supplementary Material) repeatedly landed on
Canale's state, which made $11/16$ a natural candidate for the exact value.

This paper shows that $\mu_c$ is strictly larger. We prove that
\begin{equation}
\mu_c\ge\mu_0\equiv\frac{4+2q^*}{7+q^*}=0.6929077063\ldots,
\label{eq:V0}
\end{equation}
where $q^*=0.6505691669\ldots$ is the unique root in $(0,1)$ of
\begin{multline}
9q^8+72q^7+288q^6+448q^5+470q^4\\
+88q^3-208q^2-128q-15=0.
\label{eq:poly}
\end{multline}
We do this by constructing networks with stable nonsynchronous states and connectivity arbitrarily
close to $\mu_0$ (Theorem~\ref{thm:V0}). These states come in a family, indexed by a small splitting
angle $h$ (Sec.~\ref{sec:T6}), and their connectivity can be pushed arbitrarily close to $\mu_0$ as $h\to0$; the subscript
$0$ refers to this limit of zero splitting. One explicit member, with connectivity $0.692905\ldots$, is
certified directly by computer (Proposition~\ref{thm:main}). The states hug the razor's-edge four-cluster state (Fig.~\ref{fig:wheels}c):
\begin{itemize}
\item three neighboring clusters each split into two halves, pulled apart by tiny angles;
\item between each pair of neighboring split clusters, the two ``crossed'' links (the pair of halves
that is closest together and the pair that is farthest apart) are thinned.
\end{itemize}
We explain below why this buys stability.

The proof uses only elementary tools. Once the right ansatz is in hand, it needs a Taylor expansion in the
splitting, exact symbolic identities, the implicit function theorem, and positivity of a $3\times3$
matrix, which reduces to interval bounds on polynomials of one variable. Turning templates into actual
networks uses Fourier analysis on cyclic groups, Gershgorin's theorem and Hoeffding's inequality. The
difficulty lies in finding the states, not in proving that they work: they are tiny, carefully tuned
deformations of the razor's-edge state, and generic numerical searches miss them
(Sec.~\ref{sec:discussion}).

Most of this work was done by AI agents: large language models that found the new states, wrote the
search and certification code, derived the proofs, and drafted the manuscript---with minimal guidance
from the author. The author chose the problem, set the standard of evidence, and decided which
results to trust (see Methods). Nothing here rests on trusting an AI: the proofs are written out in
full, and every computation they use is done by scripts that are available from the author on request.

The paper is organized as follows.
\begin{itemize}
\item Section~\ref{sec:background} reviews fixed points and their stability.
\item Section~\ref{sec:clusters} introduces a coarse-grained \textit{cluster model}, or template,
which describes a state through a handful of clusters. We work through three examples: the
four-cluster state, twisted states, and Canale's construction.
\item Section~\ref{sec:T6} presents the new states and the mechanism behind them.
\item Section~\ref{sec:proof} proves the lower bound; Fig.~\ref{fig:route} maps the argument.
\item Section~\ref{sec:discussion} discusses why the state was missed, and what it suggests about
$\mu_c$.
\end{itemize}

\begin{figure*}
\centering
\includegraphics[width=\textwidth]{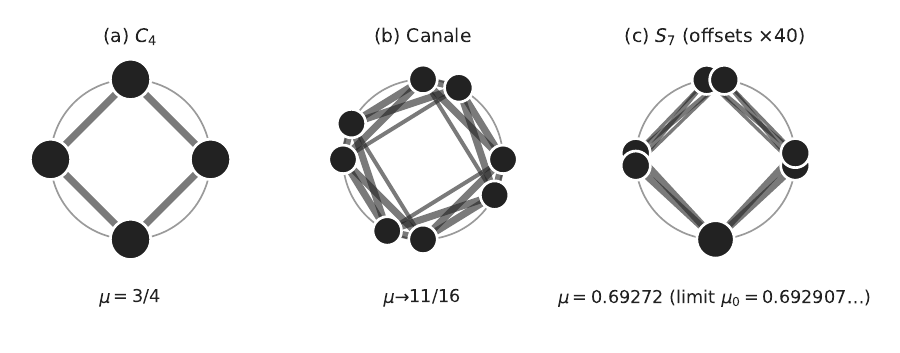}
\caption{Cluster states drawn on the phase circle. Dots are clusters, with area proportional to
cluster size $x_a$. Chords are links between clusters, with width proportional to the link mass
$x_ax_bP_{ab}$; links within clusters are not drawn. Ticks mark the quarter turns.
(a) The four-cluster state $C_4$, $\mu=3/4$. It is linearly degenerate: its Hessian is only positive
semidefinite, with zero eigenvalues, and it is destabilized nonlinearly.
(b) Canale's eight-cluster state, $\mu\to11/16$.
(c) The certified split state $S_7$ at $h=0.002$, $\mu=0.69272$, with offsets exaggerated forty-fold.
Three neighboring clusters of $C_4$ are split into halves. As the splitting
$h\to0$, members of the family approach $\mu_0=0.6929077\ldots$ (Theorem~\ref{thm:V0}).}
\label{fig:wheels}
\end{figure*}

\section{Fixed points and their stability}\label{sec:background}

Equation~\eqref{eq:kuramoto} is a gradient system, $\dot\theta_j=-\partial V/\partial\theta_j$, with
potential
\begin{equation}
V(\theta)=\sum_{j<k}A_{jk}\bigl[1-\cos(\theta_j-\theta_k)\bigr].
\label{eq:potential}
\end{equation}
Since $V$ is a real-analytic function on the torus, every trajectory converges to a fixed point by
\L{}ojasiewicz's gradient inequality~\cite{Lojasiewicz1963}, and the stable fixed points are the local minima of
$V$.
\begin{itemize}
\item \emph{Fixed points} satisfy $\sum_kA_{jk}\sin(\theta_k-\theta_j)=0$ for every $j$: the pulls
on each oscillator from its neighbors cancel.
\item \emph{Linear stability} is governed by the Jacobian $J=-H$, where $H$ is the Hessian of $V$:
\begin{align}
H_{jk}&=-A_{jk}\cos(\theta_j-\theta_k)\qquad(j\neq k),\nonumber\\
H_{jj}&=\sum_kA_{jk}\cos(\theta_j-\theta_k).
\label{eq:hessian}
\end{align}
\end{itemize}

Every fixed point has one neutral direction: the global rotation $\theta_j\mapsto\theta_j+c$, whose
eigenvector $\bm1$ has eigenvalue zero. We call a fixed point \textit{strictly stable} if all the
other eigenvalues of $H$ are positive. Such a state is a strict local minimum of $V$ modulo global rotations, and it attracts
an open set of initial conditions. A single strictly stable nonsynchronous state is therefore enough
to show that a network does not always synchronize.

When some other eigenvalue of $H$ is zero, linear stability is inconclusive, and higher-order terms
decide. This is the razor's-edge situation of Ref.~\onlinecite{KST2021}.

\section{A cluster model}\label{sec:clusters}

All the record-setting states so far share a simple structure: the oscillators fall into a handful
of phase-locked \textit{clusters}. It pays to describe such states at the level of the clusters
rather than individual oscillators.

\begin{definition}
A \emph{template} with $m$ clusters is a triple $(x,\varphi,P)$ consisting of
\begin{enumerate}
\item[(a)] \emph{sizes} $x_a>0$, with $\sum_ax_a=1$: the fraction of oscillators in cluster $a$;
\item[(b)] \emph{phases} $\varphi_a$: the common phase of cluster $a$;
\item[(c)] \emph{densities} $P_{ab}=P_{ba}\in[0,1]$: the fraction of possible links present between
clusters $a$ and $b$, or within cluster $a$ when $b=a$.
\end{enumerate}
\end{definition}

The template describes a large-network limit, where cluster $a$ has $x_an$ members and
neighbor counts divided by $n$ tend to $P_{ab}x_b$. Complete internal wiring gives $x_an-1$
neighbors, since self-loops are excluded. Section~\ref{sec:proof} gives a realization with exact
neighbor counts after a small adjustment to the diagonal densities. At the template level:
\begin{itemize}
\item \emph{Connectivity.} The degree of an oscillator in cluster $a$, divided by $n$, tends to $d_a$, where
$d_a=\sum_bP_{ab}x_b$. The connectivity of the blow-up tends to $\mu(T)=\min_ad_a$.
\item \emph{Phase locking.} The pulls on every oscillator cancel if
\begin{equation}
F_a\equiv\sum_bP_{ab}x_b\sin(\varphi_a-\varphi_b)=0\qquad\text{for every }a.
\label{eq:F}
\end{equation}
\item \emph{Stability against splitting.} Suppose one oscillator in cluster $a$ is nudged away from
its cluster-mates. It is pulled back with strength
\begin{equation}
c_a\equiv\sum_bP_{ab}x_b\cos(\varphi_a-\varphi_b),
\label{eq:c}
\end{equation}
which is the large-network limit of the diagonal entry $H_{jj}/n$ of Eq.~\eqref{eq:hessian}.
We need $c_a>0$.
\item \emph{Stability of cluster modes.} Suppose instead that whole clusters are rotated by different
amounts $y_a$. To second order the potential changes by $\tfrac{n^2}{2}\mathcal L(y)$, where
\begin{equation}
\mathcal L(y)=\sum_{a<b}x_ax_bP_{ab}\cos(\varphi_a-\varphi_b)(y_a-y_b)^2.
\label{eq:L}
\end{equation}
We need $\mathcal L(y)>0$ for every $y$ that is not a global rotation.
\end{itemize}

A template with all three properties is \textit{strict}. Section~\ref{sec:proof} shows that a strict
template can be blown up into actual networks whose cluster state is strictly stable.

For fixed sizes and phases, the three conditions and the degrees $d_a$ are all linear in the densities
$P$. So finding the densest strict template with given sizes and phases is a convex problem, a
semidefinite program (with a small margin in place of the strict inequalities). Our searches solve it
as an inner step, while an outer loop moves the sizes and phases (Supplementary Material). At the
optimum most densities sit at $0$ or $1$, and only a few take intermediate values. We call the
resulting choice of $0$s, $1$s and intermediate entries the \emph{link pattern}.

Conversely, grouping the oscillators of any stable fixed point by phase gives a template that
satisfies the non-strict versions of these conditions (a strict template requires, in addition, a
Hessian positive definite apart from rotations). So templates lose nothing: the search for
dense stable states is a search over templates.

Three examples make this concrete.

\paragraph{The four-cluster state $C_4$.} Take four equal clusters at the quarter turns (a four-cluster splay state), with
$x_a=1/4$ and $\varphi_a=a\pi/2$. Link every cluster completely to itself and to its two neighbors,
and not at all to the opposite cluster.
\begin{itemize}
\item The template has $\mu(T)=3/4$. In a finite complete-cluster realization with $n$ divisible by
four, each vertex has degree $3n/4-1$, so its connectivity $(3n/4-1)/(n-1)$ tends to $3/4$.
\item Each cluster feels equal and opposite pulls from the clusters at $\pm\pi/2$, so $F_a=0$.
\item The splitting strength is $c_a=\tfrac14$: only the cluster's own members pull, since
$\cos(\pm\pi/2)=0$.
\item But $\mathcal L\equiv0$. Neighboring clusters are coupled with weight $\cos(\pi/2)=0$, and
opposite clusters are not coupled at all, so all three cluster modes have zero eigenvalue.
\end{itemize}
This is the razor's-edge state~\cite{KST2021}. Nonlinear terms destabilize it: rotating two
neighboring clusters apart lowers $V$ at cubic order~\cite{LZ2026}.

\paragraph{Twisted states.} Take $m$ equal clusters at $\varphi_a=2\pi ka/m$, with densities
that depend only on $a-b$. For $m=n$ each cluster is a single oscillator, and this is the
$k$-twisted state on a circulant network~\cite{WSG2006}. Optimizing the densities is a linear
program. In our computations its value approaches $0.6838$ as $m$ grows, consistent with the circulant optimum of
Ref.~\onlinecite{Yoneda2021}.

\paragraph{Canale's state.} Take eight equal clusters on two interleaved quarter lattices, one
rotated from the other by $\arctan(1/2)\approx26.6^\circ$. Link the clusters in a pattern of $0$s and
$1$s, plus a few half-density links~\cite{Canale2022}. Each cluster is then coupled to $5.5$ of the
$8$ clusters, so $\mu=11/16$. The state is strictly stable at connectivities just below $11/16$.

\section{Splitting the four-cluster state}\label{sec:T6}

\subsection{The split state $S_7$}

Our main state, which we call $S_7$, is a small deformation of $C_4$ (Fig.~\ref{fig:wheels}c).
Table~\ref{tab:S7limit} gives its limiting template exactly; in words:
\begin{enumerate}
\item[(i)] Three neighboring clusters, near phases $0$, $\pi/2$ and $\pi$, each split into two halves.
Each half contains a fraction $a$ of the oscillators. The balance of pulls at $h\to0$ fixes
$a=1/(7+q)\approx0.1307$, with $q$ the density in item (iv).
\item[(ii)] The halves are pulled apart by tiny angles: $\pm h$, $\pm wh$ and $\pm h$ for the three
clusters, with $h$ small. The middle cluster is split wider, by the factor $w\approx1.398$. This
ratio is not a free choice: together with $q$, it is fixed by requiring two cluster modes to be
exactly neutral, and Sec.~\ref{sec:limit} derives its exact value $w^*=1.3982781340\ldots$.
\item[(iii)] The fourth cluster stays whole, with size $1-6a\approx0.216$.
\item[(iv)] In the limiting link pattern, links are complete within and between neighboring clusters,
except for the four crossed pairs of halves, which have density $q\approx0.65$. Between two
neighboring split clusters, these are the pair of halves whose phase gap is smallest and the pair whose
gap is largest. Opposite clusters
are not linked. Certified finite members (Figs.~\ref{fig:wheels}c and~\ref{fig:mechanism}) have small rational adjustments to
these values; at $h=0.002$, e.g., the saved numerical member has whole-cluster internal density
$\approx0.907$. This is a feature of that member, not a requirement for equilibrium or strict stability:
diagonal densities enter neither $F$ nor $\mathcal L$, and increasing them increases $c_a$ and the degree.
\end{enumerate}
\begin{table}[t]
\caption{The limiting $S_7$ template. A, B and C are the three split clusters, near phases $0$,
$\pi/2$ and $\pi$, and W is the whole cluster at $3\pi/2$. Each split cluster has a minus half (m),
displaced backwards, and a plus half (p), displaced forwards; for example, Am and Ap sit at $-h$ and
$+h$. Sizes use $a=1/(7+q)$; phases use the splitting $h$ and the
ratio $w$. The densities $p_{ab}$ are the limiting link pattern: $1$ (complete), $q$ (thinned) or $0$
(empty). At finite $h$ the densities are corrected by $O(h)$ (Sec.~\ref{sec:V0proof}). The
connectivity of this pattern is $(4+2q)/(7+q)$, which equals $\mu_0$ at $q=q^*$.}
\label{tab:S7limit}
\begin{ruledtabular}
\begin{tabular}{cccccccccc}
 & $x_a$ & $\varphi_a$ & Am & Ap & Bm & Bp & Cm & Cp & W\\
\hline
Am & $a$ & $-h$ & 1 & 1 & 1 & $q$ & 0 & 0 & 1\\
Ap & $a$ & $h$ & 1 & 1 & $q$ & 1 & 0 & 0 & 1\\
Bm & $a$ & $\pi/2-wh$ & 1 & $q$ & 1 & 1 & 1 & $q$ & 0\\
Bp & $a$ & $\pi/2+wh$ & $q$ & 1 & 1 & 1 & $q$ & 1 & 0\\
Cm & $a$ & $\pi-h$ & 0 & 0 & 1 & $q$ & 1 & 1 & 1\\
Cp & $a$ & $\pi+h$ & 0 & 0 & $q$ & 1 & 1 & 1 & 1\\
W & $1-6a$ & $3\pi/2$ & 1 & 1 & 0 & 0 & 1 & 1 & 1\\
\end{tabular}
\end{ruledtabular}
\end{table}

In the limiting link pattern, each half of the first and third split clusters is coupled to a fraction
\begin{equation}
\mu=2a+a(1+q)+(1-6a)=1-a(3-q)
\label{eq:t}
\end{equation}
of the network: its own cluster ($2a$), the middle split cluster ($a(1+q)$, one of its two links
thinned), and the whole cluster ($1-6a$). With $a=1/(7+q)$ this equals $(4+2q)/(7+q)$. The halves of the middle
cluster have the same degree, $2a+2a(1+q)$, and the whole cluster has more, $1-2a$, so
Eq.~\eqref{eq:t} is the connectivity.

For finite $h$, Eq.~\eqref{eq:t} holds only approximately. A certified member at $h=0.002$,
drawn in Figs.~\ref{fig:wheels}c and~\ref{fig:mechanism}, has connectivity $0.69272$; its exact data are
in the Supplementary Material.

\subsection{Why it works: a heuristic picture}

Why should splitting help? The answer lies in how $C_4$ fails. Its cluster modes are neutral because
neighboring clusters, sitting exactly a quarter turn apart, are coupled with weight $\cos(\pi/2)=0$.
Splitting the clusters breaks this degeneracy (Fig.~\ref{fig:mechanism}).
\begin{enumerate}
\item[(i)] \emph{Splitting tilts the neighbor links.} Between two neighboring split clusters, some pairs
of halves are now slightly closer than a quarter turn, with slightly positive coupling
$\cos(\varphi_a-\varphi_b)$, and some slightly farther, with slightly negative coupling.
\item[(ii)] \emph{The most tilted links are thinned, on both sides.} The two crossed pairs of halves have
the largest tilts, one positive and one negative. Both are thinned to the same density
$q\approx0.65$, which keeps the state symmetric so that the pulls on each cluster still balance.
\item[(iii)] \emph{The price is density.} Thinning costs connectivity, $\mu=1-a(3-q)$, which is why the
best value sits well below $3/4$.
\end{enumerate}
This picture only says which links change; it does not explain why the result is stable. The signs of
individual couplings do not decide stability, which depends on the full quadratic form~\eqref{eq:L}. In
the formal expansion of Sec.~\ref{sec:limit}, the first-order contributions of the splitting to that
form cancel, and stability is decided by a second-order form, in which the thinning density $q$ enters.
Its positivity is proved for every member of the family with small enough splitting (Sec.~\ref{sec:proof} and Appendix~\ref{app:proof}), and certified separately for one explicit member.
The whole fourth cluster acts as an anchor, and
in our searches splitting it too did not help.

\begin{figure}
\centering
\includegraphics[width=0.85\columnwidth]{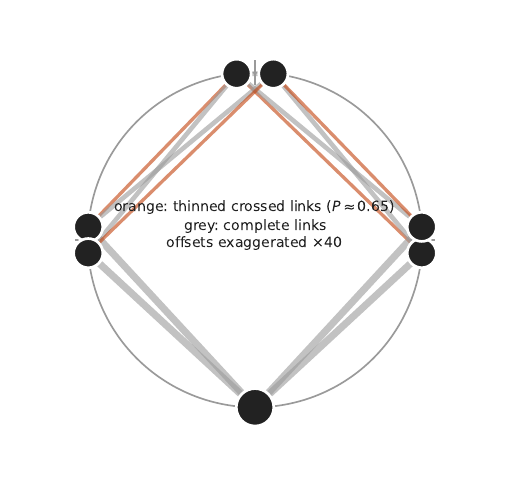}
\caption{Anatomy of $S_7$, with offsets exaggerated forty-fold. Orange: the four thinned crossed links,
at density $q\approx0.65$. Between two neighboring split clusters, one joins the closest pair of halves
(positive coupling) and the other the farthest pair (negative coupling). Grey: links that are complete in
the limiting pattern (at $h=0.002$ some carry small $O(h)$ corrections). The
unsplit cluster (bottom) anchors the state.}
\label{fig:mechanism}
\end{figure}

\subsection{The limit of zero splitting}\label{sec:limit}

Why take $h\to0$? In our searches the best connectivity of the $S_7$ family increases steadily as the
splitting shrinks:
\begin{center}
\begin{tabular}{lcccc}
$h$ & $0.02$ & $0.01$ & $0.005$ & $0.002$\\
best $\mu$ & $0.6906$ & $0.6918$ & $0.6924$ & $0.69272$
\end{tabular}
\end{center}
These values are numerical and uncertified, except at $h=0.002$. So the states improve as the splitting shrinks, and the family we construct approaches its best value in
this limit. There the problem also simplifies: only the leading orders in $h$ survive, leaving a finite
model that can be solved exactly.

At $h=0$ the two halves of each split cluster merge, and $S_7$ collapses to a four-cluster state with
exactly the phases of $C_4$. The wiring is different, however. The merged clusters have unequal sizes
$2a,2a,2a$ and $1-6a$ (about $0.261$ and $0.216$), and the links A--B and B--C are thinned to
effective density $(1+q)/2\approx0.83$; the links A--W and C--W stay complete. This collapsed state
already has connectivity $(4+2q)/(7+q)$, but like $C_4$ it is degenerate: neighboring clusters are a
quarter turn apart, so $\mathcal L\equiv0$. The degree is thus set at $h=0$, while stability can only
come from the splitting, which is why we expand in $h$.

The stability conditions of $S_7$ can be expanded formally in the splitting $h$. Rescale
$\varphi_a=j\pi/2+hs_a$ and keep, for each condition, the leading order at which it is nontrivial:
\begin{itemize}
\item equilibrium and the cuts between clusters at orders $1$ and $h$;
\item the cluster-mode form~\eqref{eq:L} at order $h^2$, after the within-cluster directions are
eliminated.
\end{itemize}
This gives a finite \emph{limiting model} whose variables are the sizes $x_a$, the rescaled offsets
$s_a$ and the densities. We analyzed it for the symmetric link pattern of Table~\ref{tab:S7limit}, with equal halves and
the complete/thinned/empty links of item (iv). We look for the point of this model at which two
second-order cluster modes are exactly neutral:
\begin{itemize}
\item the ``dipole'' rotating the first split cluster against the third;
\item the dipole rotating the middle split cluster against the whole one.
\end{itemize}
All three cluster modes of $C_4$ are neutral. In the limiting model, these two are the ones that limit
the thinning density: raising $q$ raises the connectivity but softens them, and $q^*$ is where both
become exactly neutral. Imposing both conditions fixes the connectivity at
\begin{equation}
\mu_0=1-a^*(3-q^*)=\frac{4+2q^*}{7+q^*},\qquad a^*=\frac1{7+q^*},
\label{eq:V0closed}
\end{equation}
where $q^*$ and the offset ratio $w^*$ of the middle cluster solve the two neutrality conditions
\begin{align}
w^2&=\frac{q^2+6q+1}{(q+1)^2},\nonumber\\
4(1-q^2)\,w&=2q^2w^2+(3q-1)(q+1).
\label{eq:qw}
\end{align}
Eliminating $w$ shows that $q^*$ is the unique root in $(0,1)$ of the irreducible polynomial~\eqref{eq:poly},
with $w^*=1.3982781340\ldots$. So $\mu_0=0.6929077063695863636\ldots$ (Eq.~\eqref{eq:V0}) is an algebraic
number of degree $8$.
In Sec.~\ref{sec:proof} we prove that actual strict states exist with connectivity arbitrarily close to
$\mu_0$, which gives $\mu_c\ge \mu_0$ (Theorem~\ref{thm:V0}). That proof uses only the two neutrality conditions, not any
optimality of $\mu_0$. Several further observations are numerical only:
\begin{itemize}
\item $\mu_0$ is a local maximum of the limiting model within this link pattern (we have not proved this);
\item allowing more pieces per cluster did not raise it;
\item a one-parameter family with unequal halves appears to reach the same value.
\end{itemize}
We do not claim that $\mu_0$ is the best possible, i.e.\ that $\mu_c=\mu_0$.

\section{Proof of the lower bound}\label{sec:proof}

\begin{figure*}
\centering
\includegraphics[width=\textwidth]{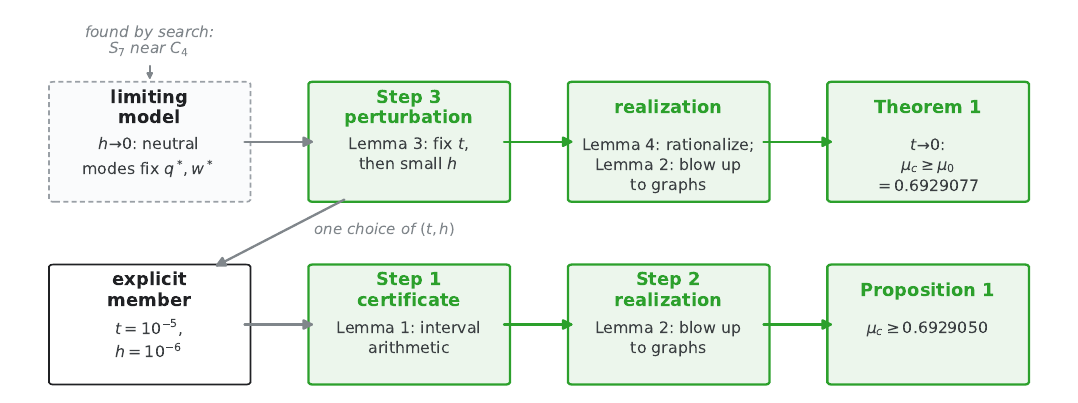}
\caption{Route of the proof. Green boxes are proved; the dashed box is a formal limit that only
suggests what to prove; the plain box is exact data, obtained by fixing one choice of $(t,h)$. Top: the $h\to0$ limiting model supplies an
ansatz, which a perturbation argument (Step~3) turns into strict templates for every small $t>0$; the
realization lemma turns these into graphs, and letting $t\to0$ gives the main result,
$\mu_c\ge\mu_0$ (Theorem~\ref{thm:V0}). Bottom: one explicit output of that construction is
certified by interval arithmetic (Step~1) and blown up into graphs (Step~2), giving
Proposition~\ref{thm:main} independently of the perturbation argument.}
\label{fig:route}
\end{figure*}

\begin{theorem}[Main result]\label{thm:V0}
$\mu_c\ge \mu_0=(4+2q^*)/(7+q^*)=0.6929077063\ldots$
\end{theorem}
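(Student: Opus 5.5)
The plan has two halves. First, for every small $t>0$ I construct a strict template whose connectivity $\mu(T)$ is at least $\mu_0-t$. Second, I blow each such template up into finite graphs whose cluster state is strictly stable and whose connectivity tends to $\mu(T)$. Letting $t\to0$ then gives $\mu_c\ge\mu_0$. Since a strictly stable state is a strict local minimum of $V$ modulo rotations, the bound also holds for the dynamical version of $\mu_c$.

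\emph{Step 1: the perturbative templates.} I start from the limiting $S_7$ pattern of Table~\ref{tab:S7limit} with $q$ slightly below $q^*$, say $q=q^*-t$. The offset ratio $w$, the sizes and the densities are kept as free parameters, and the splitting $h$ is taken small.
\begin{itemize}
\item \emph{Equilibrium.} By the mirror symmetry of the pattern (A$\leftrightarrow$C, m$\leftrightarrow$p reflected about B and W), most of the equations $F_a=0$ hold identically. The remaining ones are solved at order $h$ by the choice $a=1/(7+q)$.
\item \emph{Implicit function theorem.} I invoke it in a few chosen density corrections of size $O(h)$, so that $F_a=0$ holds exactly at finite $h$.
\item \emph{Splitting strengths.} The $c_a$ are positive at $h=0$: each $c_a$ is then just the cluster's own size plus zero contributions from quarter-turn neighbours. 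By continuity they stay positive for small $h$.
\item \emph{Cluster modes.} I expand $\mathcal L$ in $h$. The order-$1$ part vanishes, since all neighbour gaps are $\pi/2$. In the seven-dimensional mode space modulo rotations, the within-pair directions (half against half) receive an order-$1$ positive contribution from internal links. The inter-cluster modes receive order-$h$ terms that cancel by symmetry and order-$h^2$ terms. I eliminate the within-pair directions by a Schur complement and obtain a $3\times3$ effective form $h^2 M(q,w)+O(h^3)$.
\end{itemize}

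\emph{Step 2: positivity of $M$.} The neutrality conditions~\eqref{eq:qw} say that at $(q^*,w^*)$ two eigen-directions of $M$, the two dipoles, become null. I would show that for $q<q^*$ close to $q^*$, with $w$ chosen as the root of the first relation in~\eqref{eq:qw} (or a nearby value), $M$ is positive definite. Concretely:
\begin{itemize}
\item Compute $M$ symbolically, using symmetry to block-diagonalize it into a symmetric part and an antisymmetric part.
\item Show that the two dipole entries are positive multiples of $(q^*-q)$ to first order, and that the remaining principal minor is strictly positive near $q^*$.
\item Where exact sign determination is awkward, bound the relevant one-variable polynomials in $q$ on an interval $[q^*-t_0,q^*]$ by interval arithmetic.
\end{itemize}
Once $M\succ0$ with margin depending on $t$, the $O(h^3)$ remainder is beaten for $h<h_0(t)$, so the template is strict. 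Its connectivity is $(4+2q)/(7+q)+O(h)$, which tends to $\mu_0$ as $t,h\to0$.

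\emph{Step 3: realization.} Given a strict template, I build graphs on $n$ vertices with cluster sizes $\approx x_an$.
\begin{itemize}
\item Densities $1$ and $0$ are realized as complete and empty bipartite blocks.
\item Intermediate densities are realized by circulant-type bipartite blocks between cyclic groups, which are exactly biregular, with tiny diagonal adjustments to fix degrees exactly.
\item The cluster state is then an exact fixed point, because each vertex's neighbour counts per cluster are exact.
\item For the Hessian, I split it into the cluster-mode part, which is $n^2\mathcal L$ up to $O(n)$ and positive definite, plus within-cluster parts. The within-cluster parts equal the diagonal $c_an$ minus the fluctuation of each block around its mean. Using Fourier analysis on the circulants (or Hoeffding for random blocks), the nontrivial eigenvalues of the fluctuations are $o(n)$, so Gershgorin/Weyl perturbation bounds keep all non-rotation eigenvalues positive for large $n$.
\end{itemize}

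I expect the main obstacle to be Step 2: getting the correct second-order form after eliminating the fast within-pair modes, including the contribution of the $O(h)$ density corrections, and proving its definiteness uniformly near the degenerate point $(q^*,w^*)$.
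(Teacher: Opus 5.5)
Your overall route (perturb $S_7$ at $q=q^*-t$, reduce to a $3\times3$ form at order $h^2$, then blow up) is the paper's. But Steps~1--2 miss the ingredient that makes the $h^2$ form positive, and without it the positivity claim of Step~2 fails.

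The sizes $a=1/(7+q)$ only solve equilibrium at order $1$. At order $h$, the pulls $\mp2ah$ and $\mp2wah$ between halves must be cancelled by $O(h)$ corrections to the data (in the paper, densities $hp'$ on neighbour links). Since neighbour pairs have $\cos(\varphi_c-\varphi_d)=O(h)$, these corrections enter $\mathcal L$ at order $h^2$, which is the order of $M$. So $M$ is not a function of $(q,w)$ alone. It also depends on the order-$h^2$ cut weights: $a^2u$ on the A--B and B--C cuts, and $a^2(4+2w^2-u)$ on the two cuts to W.

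In the basis of (I5), the dipole block $\bigl(\begin{smallmatrix}f&e\\ e&f\end{smallmatrix}\bigr)$ does not involve $u$ and vanishes at $(q^*,w^*)$. However, $b_{13}=2(u-u_{13})$ and $b_{13}+b_{23}=f+e$. Hence the B--W dipole, with coordinates $(-1,1,0)$ in that basis, couples to the third mode with coefficient $b_{23}-b_{13}=f+e-4(u-u_{13})$. At $t=0$ this is a null direction with nonzero coupling, so $M$ has a negative eigenvalue for every small $t$ unless $u$ is tuned to $u_{13}(q,w)$.

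Letting the implicit function theorem fix the $O(h)$ corrections on a few chosen entries pins $u$ to whatever those entries force. For instance, correcting only the thinned entries and the W links forces $u=2w(1+w)\approx6.7$ at $w^*$, whereas $u_{13}(q^*,w^*)\approx3.27$. Your Step~2 test (dipole entries $\propto t$, remaining entry positive) would then pass while $M$ is indefinite.

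The paper instead separates orders: $P=p+hp'+\Delta$, where $p'$ is an explicit two-parameter solution of the order-$h$ equilibrium equations.
\begin{itemize}
\item Setting $u=u_{13}(q,w)$ kills $b_{13}$. Positivity then reduces to $f\pm e>0$ and $b_{33}(f-e)>f(f+e)$, each equal to $t$ times a positive function along $w=w^*$.
\item The second parameter $\kappa$ is a uniform thinning of the neighbour links. It is an exact null direction of first-order equilibrium and leaves $\hat Q$ unchanged. It is chosen so that $p'<0$ on every complete neighbour link.
\end{itemize}
That last point is a second omission in your plan: entries with $p=1$ can only be corrected downwards, and you never check $0\le P\le1$. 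Only the $O(h^2)$ remainder $\Delta$ is left to the implicit function theorem. It is placed on a spanning path of neighbour pairs, never on opposite pairs where $|\cos|\approx1$, so it moves $h^{-2}\mathcal L$ by only $O(h)$.

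Finally, in Step~3 the lifted template has irrational data. Diagonal adjustments cannot repair equilibrium, because diagonal densities do not enter $F$. You must first rationalize the data and re-solve the phases by the implicit function theorem, as in Lemma~\ref{lem:rational}. After that, and with random rather than structured shift sets (an interval of shifts has Fourier coefficients of order $g$), your Step~3 is Lemma~\ref{lem:real}.
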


\begin{proposition}\label{thm:main}
There are simple graphs of arbitrarily large order $n$ with minimum degree exactly $\delta=\mu^\ast n$,
where
\[
\mu^\ast=0.6929050246493944\ldots
\]
is an explicit rational number (Supplementary Material), on which
Eq.~\eqref{eq:kuramoto} has a strictly stable nonsynchronous fixed point. Hence
$\mu_c\ge\mu^\ast>11/16$.
\end{proposition}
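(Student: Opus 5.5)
The plan has two steps: certify one explicit strict template rigorously, then blow it up into finite graphs whose minimum degree is exactly $\mu^\ast n$. The template is the saved member of the $S_7$ family at $h=0.002$. All its data are rational: the sizes $x_a$, the densities $P_{ab}$, and rational enclosures of the phases $\varphi_a$. I would use this member without relying on the perturbation argument behind Theorem~\ref{thm:V0}.

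\emph{Step 1 (certification).} Fix rational sizes and densities. Treat the seven phases modulo global rotation as unknowns, and use interval Newton (Krawczyk) on the six independent equations $F_a=0$ of Eq.~\eqref{eq:F}. One of the seven equations is redundant because $\sum_a x_aF_a=0$. This gives a small box that provably contains a unique exact equilibrium $\varphi^\ast$ near the numerical one. Invertibility of the reduced Jacobian is part of the Krawczyk test. That Jacobian is the weighted Laplacian behind $\mathcal L$, so the test also helps with the next check.

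On the box I would then bound three things with interval arithmetic:
\begin{itemize}
\item $c_a>0$ for each $a$;
\item the reduced $6\times6$ matrix of $\mathcal L$, shown positive definite by an interval Cholesky factorization or by exact rational Sylvester minors of a midpoint matrix plus a perturbation bound;
\item the degrees $d_a=\sum_bP_{ab}x_b$, computed exactly in rationals, with $\mu^\ast=\min_a d_a$.
\end{itemize}
This is the step I expect to be hardest. With $h=0.002$, the eigenvalues of $\mathcal L$ are of order $h^2\sim 4\times10^{-6}$ times small constants. Naive interval enclosures of the $\cos(\varphi_a-\varphi_b)$, which are themselves $O(h)$, could swamp them. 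I would use high-precision intervals (e.g.\ 50+ digits) and very tight phase boxes from several Krawczyk iterations. I would also evaluate $\mathcal L$ in centered form around the midpoint, so that only second-order remainders widen the bounds.

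\emph{Step 2 (blow-up).} Take $n$ a multiple of the common denominator $N$ of all $x_a$ and of all the products $x_bP_{ab}$. Let cluster $a$ have $x_an$ vertices, all at phase $\varphi^\ast_a$. Between clusters $a\ne b$, put a bipartite graph that is biregular: each vertex of $a$ has exactly $P_{ab}x_bn$ neighbours in $b$. Inside cluster $a$, put a regular graph of degree $P_{aa}x_an$, adjusting the diagonal by $O(1)$ to absorb the missing self-loop. Circulant-type graphs on $\Z_{x_an}$ do both jobs. Then every vertex of $a$ feels net pull exactly $nF_a=0$, so the configuration is an exact fixed point. The degrees are exactly $nd_a+O(1)$, and I would tune them so that the minimum is exactly $\delta=\mu^\ast n$. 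Increasing a diagonal density is harmless, as noted in item (iv) of Sec.~\ref{sec:T6}.

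\emph{Step 3 (stability of the blow-up).} Write the Hessian~\eqref{eq:hessian} as $H=D+R$. Here $D$ is the block-constant part, whose restriction to cluster-constant vectors is $n\mathcal L$ and which is $\approx nc_a$ on within-cluster mean-zero vectors. The remainder $R$ comes from the fluctuations of the biregular blocks about their means. Choosing circulant blocks built from Fourier-balanced (interval-like) connection sets makes each block's nontrivial eigenvalues $o(n)$. One could instead take random biregular blocks and bound $\|R\|=O(\sqrt{n\log n})$ by matrix Hoeffding. Gershgorin then handles the within-cluster directions, where the positive margin is $nc_a$ with $\min_a c_a>0$. The cluster directions have margin $n\lambda_{\min}(\mathcal L)>0$. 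So for $n$ large, $H$ is positive definite on $\bm1^\perp$, and the fixed point is strictly stable. Since $\mu^\ast>11/16$, the bound $\mu_c\ge\mu^\ast$ follows.
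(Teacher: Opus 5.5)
Your architecture matches the paper's: certify a rational seven-cluster template by interval arithmetic, then blow it up with exact neighbor counts. However, two steps fail as written.

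\textbf{The template is the wrong one.} The $h=0.002$ member has connectivity $0.69272\ldots<\mu^\ast$. Even a flawless execution therefore proves only $\mu_c\ge0.69272$, not the stated $\mu^\ast=0.6929050\ldots$. The template behind $\mu^\ast$ is the output of the Step-3 construction at $t=10^{-5}$, $h=10^{-6}$. It is rationalized over $D\approx3.06\times10^{42}$ with diagonal densities $P_{aa}\le1-1/D$. Its weakest cluster mode is only $\approx6.2\times10^{-19}$, because the margin scales like $h^2(t+h)$, not $h^2$. That is why the paper needs a phase box of radius $10^{-30}$, 50-digit intervals, and a Cholesky factorization of $M+vv^{\mathsf T}$, $M=X^{-1/2}LX^{-1/2}$, to remove the rotation mode.

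\textbf{The blow-up does not deliver graphs that are at once simple, exactly biregular, and spectrally controlled.}
\begin{itemize}
\item ``Interval-like'' connection sets are the opposite of Fourier-balanced. An interval of density $p\in(0,1)$ in $\Z_g$ has $|\hat S(1)|\approx g\sin(\pi p)/\pi$, which is linear in $g$.
\item Matrix Hoeffding needs independent summands, and a uniformly random biregular simple graph does not supply them.
\item Circulants on the rings $\Z_{x_an}$, which have different orders, give you no common symmetry with which to block-diagonalize the cross-cluster blocks. Without such a decomposition, Gershgorin is useless: on the full $H$ the off-diagonal row sums are comparable to $nc_a$.
\item An $O(1)$ self-loop correction gives degree $nd_a-1$ when $P_{aa}=1$, and increasing a diagonal density cannot repair that. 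Exact degree $\mu^\ast n$ requires $P_{aa}\le1-1/D$ to be built into the data.
\end{itemize}

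\textbf{The missing idea is a single ring shared by all clusters.} Take $g=D\ell$ even, and let cluster $a$ consist of $X_a$ disjoint copies of $\Z_g$. Define every edge by a shift set:
\begin{itemize}
\item $S_{ab}$ across clusters, with $S_{ba}=-S_{ab}$;
\item $T_a$ between copies of the same cluster;
\item a symmetric, loop-free $Z_a$ inside a copy, using the element $g/2$ when the size is odd.
\end{itemize}
Degrees are then exact and the graph is simple. The common $\Z_g$ symmetry splits $H$ into $D\times D$ Fourier blocks $H_\omega$.
\begin{itemize}
\item $H_0$ equals $\tfrac{n^2}{g}\mathcal L$ on cluster-constant vectors and exactly $nc_a$ on zero-sum vectors, because $|Z_a|=|T_a|$.
\item For $\omega\ne0$, Gershgorin reduces stability to $c_ag>\beta$, where $\beta$ bounds the nontrivial Fourier coefficients of the shift sets.
\item For independently drawn random shift sets, $\beta$ is controlled by scalar Hoeffding for sampling without replacement plus a union bound.
\end{itemize}
Alternatively, you could rescue your Step~3 by citing a spectral-gap theorem for dense random regular and biregular graphs. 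That is a much heavier input than matrix Hoeffding, and it is not what you invoked.
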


Theorem~\ref{thm:V0} is our main result. It is an existence proof: it shows that strict states exist
for all small enough splittings, but does not name one. Proposition~\ref{thm:main} supplies one
concrete, computer-checked member of the family, less than $3\times10^{-6}$ below $\mu_0$; it does not
depend on the perturbation argument. Since $\delta/(n-1)>\delta/n=\mu^\ast$, it holds in either
normalization of the connectivity. Figure~\ref{fig:route} maps the argument: Steps 1 and 2 below prove
Proposition~\ref{thm:main}, and Step 3 extends them to Theorem~\ref{thm:V0}.

\subsection{Step 1: certifying the template}

The subtle point is the size of the numbers. The weakest cluster mode of our template is extremely close to neutral, a scale at which floating-point arithmetic cannot be trusted. So we use
\textit{interval arithmetic}. Every number is replaced by an interval guaranteed to contain it, and
every operation rounds outward, so the true answer always lies in the computed interval. Two standard
tools then turn numerical evidence into proof.
\begin{itemize}
\item \emph{Existence of the fixed point: the Krawczyk test.} If the Krawczyk operator, an interval version of
Newton's method, maps a small box of phases strictly inside itself, then the box contains exactly one
true solution of Eq.~\eqref{eq:F}, by a fixed-point argument~\cite{Moore2009}.
\item \emph{Stability: interval Cholesky.} A symmetric matrix is positive definite if and only if its
Cholesky factorization exists with positive pivots. If the factorization succeeds for an interval
matrix, it succeeds for every symmetric matrix inside it.
\end{itemize}

\begin{lemma}\label{lem:cert}
Let $(x,P)$ be the rational data in the Supplementary Material, with an
even common denominator $D\approx3.06\times10^{42}$. There is a unique vector of phases $\varphi^\ast$ in
an explicit box of radius $10^{-30}$ (with $\varphi_1=0$) such that $(x,\varphi^\ast,P)$ is a strict
template with seven clusters. Its minimum degree is exactly $\mu^\ast$, and $c_a\ge0.2157$ for all $a$.
\end{lemma}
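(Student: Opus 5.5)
The plan is a computer-assisted proof in exact rational and interval arithmetic, following the two tools just described; the lemma has four separate claims. Each is checked by a script working on the rational data $(x,P)$. First, the sizes are exact rationals with denominator $D$, so $\sum_a x_a=1$ and $P_{ab}=P_{ba}\in[0,1]$ are verified in exact integer arithmetic. The degrees $d_a=\sum_b P_{ab}x_b$ are exact rationals, so the claim that their minimum equals $\mu^\ast$ is an exact computation. Its attainment and the value $\mu^\ast=0.69290502\ldots$ are read off directly. None of this involves the phases, since $d_a$ does not depend on $\varphi$.

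Second, existence and uniqueness of $\varphi^\ast$. Fix $\varphi_1=0$ to remove the rotation symmetry. Because $\sum_a x_aF_a=0$ identically (by antisymmetry of $\sin$ and symmetry of $P$), one equation of \eqref{eq:F} is redundant. I would therefore drop $F_1$ and regard $G=(F_2,\dots,F_7)$ as a map of the six unknowns $(\varphi_2,\dots,\varphi_7)$. Starting from a high-precision floating Newton approximation $\tilde\varphi$ (computed with well over $30$ digits, say $100$), form the box $X$ of radius $10^{-30}$ around it. Take a floating preconditioner $Y\approx DG(\tilde\varphi)^{-1}$ and evaluate the Krawczyk operator
\[
K(X)=\tilde\varphi-YG(\tilde\varphi)+\bigl(I-Y\,DG(X)\bigr)(X-\tilde\varphi)
\]
with outward-rounded interval arithmetic, using certified enclosures of $\sin$ and $\cos$. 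If $K(X)\subset\operatorname{int}X$, then $X$ contains exactly one zero of $G$, and hence of all $F_a$ by the redundancy identity. The main obstacle is numerical conditioning: the weakest cluster mode is nearly neutral, so $DG$ (which is essentially the cluster-mode Hessian restricted to $\varphi_1=0$) has a tiny singular value. I would handle this by working at precision of a few hundred bits, so that the width of $K(X)$ stays far below $10^{-30}$ despite the large entries of $Y$.

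Third, strictness on the whole box. The splitting strengths $c_a$ are interval-evaluated over $X$, and the check confirms that their lower endpoints are at least $0.2157$. For the cluster modes, write $\mathcal L(y)=y^{\mathsf T}My$ with $M_{ab}=-x_ax_bP_{ab}\cos(\varphi_a-\varphi_b)$ for $a\ne b$, and diagonal entries equal to minus the off-diagonal row sums. Then $M\mathbf 1=0$, so positivity modulo rotations is equivalent to positive definiteness of the $6\times6$ principal submatrix obtained by deleting row and column $1$. I would enclose this submatrix over $X$ and run an interval Cholesky factorization, checking that every pivot has a positive lower endpoint. This certifies $\mathcal L>0$ off the rotations for every phase vector in $X$, in particular at $\varphi^\ast$.

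Again the delicate point is the smallest pivot, which is extremely close to zero. To keep the interval widths below it, I would centre the computation at the midpoint $\tilde\varphi$ and rely on the tiny box radius and high working precision. If plain interval Cholesky loses too much, I would fall back on an exact alternative: compute a floating $LDL^{\mathsf T}$ factorization at $\tilde\varphi$, write $M=\tilde M+E$, and bound $\|E\|$ by Gershgorin, using Lipschitz bounds of $\cos$ on $X$. Strictness then follows from a rational lower bound on the smallest eigenvalue of $\tilde M$ exceeding $\|E\|$. Combining the three parts gives a strict seven-cluster template with the stated degree and $c_a$ bounds.
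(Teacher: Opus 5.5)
Your proposal is correct and takes essentially the paper's route: Krawczyk on $F_2,\dots,F_7$ with $\varphi_1=0$, the identity $\sum_a x_aF_a\equiv0$ to recover the dropped equation, interval enclosures of the $c_a$ over the box, interval Cholesky for the cluster modes, and exact rational degrees. The differences are minor. The paper factors the $7\times7$ matrix $M+vv^{\mathsf T}$, with $M=X^{-1/2}LX^{-1/2}$ and $v=(\sqrt{x_a})$, where you use the equivalent $6\times6$ principal submatrix. It also adds an interval check that the seven phases are pairwise distinct over the box. You should include that check, since a strict template can be synchronous and Proposition~\ref{thm:main} needs a nonsynchronous state.
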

The margin is tiny. With $\varphi_1$ held fixed, the smallest eigenvalue of the matrix of the
cluster-mode form~\eqref{eq:L} is $6.2313\ldots\times10^{-19}$, enclosed rigorously over the whole box by
the second verifier.

The proof is in Appendix~\ref{app:lemmas}.

\subsection{Step 2: blowing the template up}

Now we need actual networks. The idea, sketched in Fig.~\ref{fig:realization}, is to give every
cluster the structure of a ring: its oscillators sit at positions $0,1,\dots,g-1$ around a circle, with
arithmetic mod $g$ (the cyclic group $\Z_g$). Rings are linked by fixed ``shift patterns'': oscillator
$v$ of one cluster is linked to oscillator $w$ of another whenever the offset $w-v$ lies in a
prescribed set.
\begin{itemize}
\item Every oscillator sees the same pattern, so each one has exactly the prescribed number of
neighbors in each cluster. The cluster phases are then an exact fixed point.
\item The whole network is invariant under rotating all rings together. Its Jacobian therefore splits
into small blocks, one per Fourier mode of the ring, just as for twisted states on circulant
networks~\cite{WSG2006}.
\item The uniform ($\omega=0$) mode reproduces the cluster-mode form $\mathcal L$, which was certified in
Step~1.
\item The other modes are stable as long as the shift patterns look ``random enough'', meaning their
Fourier coefficients are small. Random patterns do.
\end{itemize}

\begin{figure}
\centering
\includegraphics[width=\columnwidth]{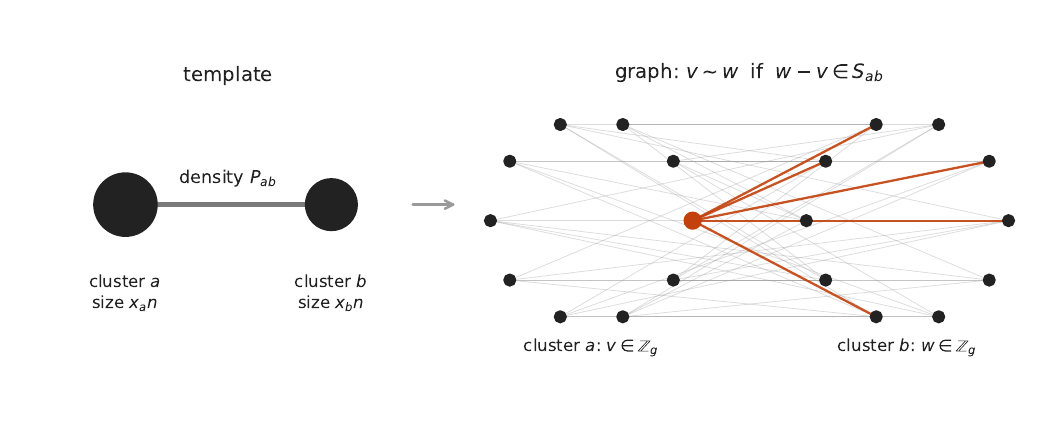}
\caption{From template to network. Left: two clusters of a template, with density $P_{ab}$ between
them. Right: each cluster becomes a ring of oscillators labeled by $\Z_g$. Oscillator $v$ of cluster
$a$ is linked to oscillator $w$ of cluster $b$ whenever $w-v$ lies in a shift set $S_{ab}$ of size
$P_{ab}g$. Orange: the links of one oscillator. Every oscillator has the same pattern of links, so the
cluster phases form an exact fixed point.}
\label{fig:realization}
\end{figure}

\begin{lemma}\label{lem:real}
Let $(x,\varphi,P)$ be a strict template with rational data $x_a=X_a/D$ and $P_{ab}=Q_{ab}/D$. Here
the $X_a$ and $Q_{ab}$ are integers, $D$ is even, $Q$ is symmetric, $0\le Q_{ab}\le D$, and
$Q_{aa}\le D-1$. Let $N=m(m+3)/2$ (the number of shift sets used). Then for every integer $\ell\ge1$ with
$c_{\min}D\ell>2\sqrt{2D\ell\ln(8D\ell N)}+3$ there is a simple graph on $n=D^2\ell$ vertices such that
\begin{itemize}
\item each vertex of cluster $a$ has exactly $P_{ab}x_bn$ neighbors in cluster $b$, so
$\delta=\mu(T)\,n$;
\item the cluster phases $\theta_j=\varphi_{a(j)}$ form a strictly stable fixed point, which is
nonsynchronous if the $\varphi_a$ are distinct.
\end{itemize}
\end{lemma}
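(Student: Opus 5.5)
The plan is to realize each cluster $a$ as $X_aD\ell$ copies of $\Z_g$ with $g=D\ell$, rather than as a single ring. The cluster then has $n_a=X_aD\ell=x_an$ vertices, written as $r_a=X_a$ rings of length $g$, and $n=D^2\ell=\sum_a X_a g$. For each pair of rings, one in cluster $a$ and one in cluster $b$, I choose a shift set $S\subset\Z_g$. Vertex $v$ of the first ring is joined to $w$ of the second iff $w-v\in S$. Each vertex of cluster $a$ must get exactly $P_{ab}x_bn=Q_{ab}X_b\ell$ neighbours in cluster $b$. Since $Q_{ab}X_b\ell$ need not split evenly over the $X_b$ rings, I instead require each vertex to meet $Q_{ab}\ell/\ldots$; this is cleanest with one ring per cluster of length $g_a=x_an=X_aD\ell$. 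The difficulty is that unequal ring lengths break the common $\Z_g$ symmetry. I will therefore use the common group $\Z_g$, $g=D\ell$, and view cluster $a$ as $\Z_g\times\{1,\dots,X_a\}$. Each vertex of cluster $a$ then needs $Q_{ab}\ell$ neighbours in each of the $X_b$ layers of cluster $b$, which gives $Q_{ab}X_b\ell$ in total, as required. All layer pairs of clusters $(a,b)$ use the same shift set $S_{ab}$ of size $Q_{ab}\ell$. Symmetry $Q_{ab}=Q_{ba}$ is handled by taking $S_{ba}=-S_{ab}$.

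Within a cluster I need a symmetric set $S_{aa}=-S_{aa}$ with $0\notin S_{aa}$, and for layer pairs within a cluster the same shift set works. The within-layer degree must come out right: each vertex needs $Q_{aa}\ell$ neighbours per layer, with no self-loop. This forces $Q_{aa}\ell\le g-1$, which follows from $Q_{aa}\le D-1$. Parity matters because a symmetric set avoiding $0$ has even size unless it contains $g/2$, and $D$ even makes $g$ even, so $g/2$ is available. The count of independent choices is $m(m-1)/2$ off-diagonal sets, $m$ diagonal sets, and $m$ more for the inter-layer sets within a cluster, giving $m(m+3)/2=N$. Exact degrees then hold by construction. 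Phase locking is exact because each vertex sees the template's neighbour counts, so $F_a=0$ gives a fixed point. The fixed point is nonsynchronous when the $\varphi_a$ are distinct.

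For stability, the graph is invariant under the diagonal $\Z_g$ action, so the Hessian block-diagonalizes over characters $\omega\in\Z_g$. Each block has size $\sum_a X_a$ and the form $H_\omega=\mathrm{diag}(\text{degree-weighted }c)-[\hat 1_{S_{ab}}(\omega)\cos(\varphi_a-\varphi_b)]$. The $\omega=0$ block is a lift of the cluster form: its restriction to layer-constant vectors is $n\cdot$ the matrix of $\mathcal L$. Its layer-antisymmetric part within a cluster has eigenvalue of order $c_a n$, positive. So the $\omega=0$ block is positive definite apart from $\bm1$, by strictness. For $\omega\ne0$ I apply Gershgorin: the diagonal entry is $H_{jj}=c_an$, and each off-diagonal entry is bounded by $|\hat 1_S(\omega)|$. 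It then suffices that $c_{\min}n>\sum_b X_b\max_\omega|\hat 1_{S_{ab}}(\omega)|$, and this reduces to $\max_{\omega\ne0}|\hat1_S(\omega)|<c_{\min}g/\ldots$ after normalizing.

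The main obstacle is producing shift sets with all nonzero Fourier coefficients small, simultaneously for all $N$ sets and all $\omega$. I would take uniformly random subsets of the exact size, or symmetric random sets for the diagonal. Hoeffding's inequality, in its version for sampling without replacement, bounds each real and imaginary part. A union bound over $N$ sets, $g$ frequencies and two parts then gives $|\hat1_S(\omega)|\le\sqrt{2g\ln(8gN)}$ with positive probability. The additive $3$ in the hypothesis absorbs the single forced element $g/2$ and the exclusion of $0$. This step is where the stated inequality $c_{\min}D\ell>2\sqrt{2D\ell\ln(8D\ell N)}+3$ must be matched exactly, so the bookkeeping of constants, and whether Gershgorin is applied per layer or per cluster, needs care.
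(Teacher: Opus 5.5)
Your construction and the outline of the stability argument are the paper's. You take $g=D\ell$ and cluster $a$ as $X_a$ layers of $\Z_g$, with cross-cluster sets $S_{ab}$, $S_{ba}=-S_{ab}$. You use a symmetric loop-free within-layer set $Z_a$ (with $g/2$ for parity), a separate between-layer set $T_a$, and $N=m(m+3)/2$. Stability then goes through Fourier blocks of size $D$, the $\omega=0$ block split into layer-constant and zero-sum parts, Gershgorin for $\omega\ne0$, and Hoeffding without replacement plus a union bound. Drop the false starts in your first paragraph. Also drop the remark that layer pairs reuse the within-layer set, which contradicts your count of $N$ (either choice works).

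What is missing is the quantitative step you leave open, and your interim claim $|\hat{1}_S(\omega)|\le\sqrt{2g\ln(8gN)}$ is not what the argument gives. Put $\lambda=\sqrt{2g\ln(8gN)}$. Bounding real and imaginary parts by $\lambda$ bounds the modulus of $\hat S_{ab}$ and $\hat T_a$ only by $\sqrt2\lambda$. More importantly, Hoeffding does not apply directly to a random symmetric set, because its elements come in pairs $\pm s$ rather than as a simple random sample from $\Z_g$. You need to draw a half $H_a$ of size $k\le g/2-1$ uniformly from $\{1,\dots,g/2-1\}$ and set $Z_a=H_a\cup(-H_a)$, adding $g/2$ if the size is odd. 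Then $\hat Z_a(\omega)=2\Sigma_a+\epsilon$ is real, with $\Sigma_a=\sum_{s\in H_a}\cos(2\pi\omega s/g)$ and $|\epsilon|\le1$. This population does not have mean zero, but its total is $-(1+(-1)^\omega)/2$, so $|\mathbb E\,\Sigma_a|\le1$. Hoeffding applied to $\Sigma_a$ then gives $|\hat Z_a(\omega)|\le2(\lambda+1)+1=2\lambda+3$. That is the source of both the factor $2$ and the $+3$ in the hypothesis.

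With this, the Gershgorin step closes with no per-layer versus per-cluster ambiguity. The diagonal of $H_\omega$ is $nc_a-\hat Z_a(\omega)$, not $nc_a$. Moving $|\hat Z_a|$ to the right-hand side, each row involves $1+(X_a-1)+\sum_{b\ne a}X_b=D$ Fourier coefficients, each of modulus at most $\beta=2\lambda+3\ge\sqrt2\lambda$. Since $n=Dg$, it suffices that $c_ag>\beta$, which is exactly the hypothesis $c_{\min}D\ell>2\sqrt{2D\ell\ln(8D\ell N)}+3$. At $\omega=0$, replace ``of order $c_an$'' by the exact value. On vectors summing to zero within each cluster, $H_0$ acts on cluster $a$ as $(nc_a-|Z_a|+|T_a|)\,\mathrm{Id}$. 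This equals $nc_a\,\mathrm{Id}$ because your per-layer own-cluster degree $Q_{aa}\ell$ forces $|Z_a|=|T_a|$.
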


The proof, given in Appendix~\ref{app:lemmas}, follows the four points above; random shift sets are
shown to have small Fourier coefficients by Hoeffding's inequality~\cite{Hoeffding1963}.

For our template, $D\approx3.06\times10^{42}$ and $c_{\min}\approx0.216$. The condition holds for every
$\ell\ge1$, and Proposition~\ref{thm:main} follows.

The networks are gigantic, with $n\ge D^2>9\times10^{84}$. Their size is set by the fine rational data
needed to capture a template whose weakest cluster mode is only $\sim6\times10^{-19}$ from neutral. We show that these networks
exist, but we do not write one down.

\subsection{Step 3: approaching $\mu_0$}\label{sec:V0proof}

Proposition~\ref{thm:main} certifies one state. To reach $\mu_0$ we replace the certificate by a
perturbation argument; Appendix~\ref{app:proof} gives the full proof. It has four steps.
\begin{enumerate}
\item \emph{Ansatz.} Take the template of Table~\ref{tab:S7limit} with $w=w^*$ and $q=q^*-t$, where
$0<t\le0.05$. Backing the thinning density off its critical value by $t$ makes the two neutral
dipoles strictly stable, with a gap of order $t$. Its densities are $P=p+hp'+\Delta$, where $p'$ is an
explicit first-order correction and $\Delta=O(h^2)$ is a small correction on a path of neighbor links.
\item \emph{Expansion.} Exact symbolic identities show that equilibrium holds to order $h$. After
rescaling by $h^{-2}$, the cluster-mode form tends to a limit whose stability reduces to a $3\times3$
matrix $\hat Q(q,w)$. Interval arithmetic shows that $\hat Q$ is positive definite, with a gap
proportional to $t$.
\item \emph{Lifting (Lemma~\ref{lem:lift}).} For fixed $t$ and small enough $h$, the correction $\Delta$
makes the equilibrium exact, the densities stay in $[0,1]$, and the template is strict, with
connectivity $(4+2q)/(7+q)-O(h)$.
\item \emph{Rationalize and realize (Lemmas~\ref{lem:rational} and~\ref{lem:real}).}
\end{enumerate}
The order matters: fix $t$ first, then take $h$ small for that $t$, because the gap shrinks with $t$.
Letting $t\to0$ gives $\mu_c\ge\mu_0$. The argument itself needs no explicit bound on $h$, but a
separate interval computation shows that $h\le1/8$ suffices for every $t\in[0,0.05]$
(Appendix~\ref{app:proof}); the state of Proposition~\ref{thm:main} ($t=10^{-5}$, $h=10^{-6}$) is one
explicit member.

\section{Discussion}\label{sec:discussion}

\paragraph*{Why were these states missed?} Our own template searches first returned Canale's state, and
so did an independent search (Supplementary Material). In our experience the new states are easy to
miss, for three reasons. (We cannot speak for earlier studies, whose settings we do not know.) Here
$\varepsilon$ denotes the margin imposed by the optimizer, which is not the same as the certified
spectral gap of the final state.
\begin{enumerate}
\item[(i)] \emph{They are fragile.} Searches usually demand a stability margin $\varepsilon$. For
Canale's state the cost is mild and roughly linear: in our runs $11/16-\mu\approx2.8\varepsilon$ for
$10^{-4}\le\varepsilon\le10^{-2}$. For the new states it is steep. In
the $S_7$ family the weakest cluster mode is only of order $h^2(t+h)$ from neutral
(Appendix~\ref{app:proof}), while the connectivity falls short of $\mu_0$ by an amount of order $t+h$. A
margin $\varepsilon$ forces $h^2(t+h)\gtrsim\varepsilon$, so, roughly, the shortfall grows like
$\varepsilon^{1/3}$ rather than linearly.
The member at $h=0.002$ is about $6\times10^{-12}$ from neutral, and the member of
Proposition~\ref{thm:main} about $6\times10^{-19}$.
\item[(ii)] \emph{They hug $C_4$.} The $S_7$ states differ from the razor's-edge state by phase offsets
of order $10^{-3}$. Near $C_4$ everything is nearly degenerate, and in our runs standard optimizers
stalled or returned spurious points.
\item[(iii)] \emph{Their basins in the search space are narrow.} In our runs, about 150 random
CMA-ES restarts, and searches that recorded about 1850 local optima (Supplementary Material), missed them. They were found only by splitting and merging the clusters of four-cluster
and Canale-type states, and then by rescaling the problem near $C_4$.
\end{enumerate}
The same searches also turned up a different, weaker deformation of $C_4$, in which two neighboring
clusters each shed a small satellite. It also beats $11/16$, with certified connectivity
$0.6915537\ldots$ (Supplementary Material), but we do not pursue it here.

\paragraph*{How close is $\mu_c$ now?} We now know that $\mu_0\le\mu_c<3/4$, with $\mu_0=0.6929077\ldots$. The new bound closes about
$9\%$ of the interval $[11/16,3/4)$ that was open; most of the gap remains.
The best states we find all hug $C_4$ and approach $\mu_0$. That is tempting evidence for $\mu_c=\mu_0$. But evidence of this kind is exactly what once supported $11/16$, and our searches are
local, so we make no conjecture. Our searches suggest, though they do not prove, that the neighborhood of the four-cluster state, the
razor's edge of Ref.~\onlinecite{KST2021}, is where $\mu_c$ is decided: both the best states we know and
the hardest part of any upper-bound proof live there.

\paragraph*{Beyond static networks.} Here the network is fixed. In many systems it is not: the
oscillators move, and who couples to whom changes in time. Examples include mobile oscillators, which
couple to whoever is nearby~\cite{Frasca2008,Fujiwara2011}, and swarmalators, which swarm in space while
they sync in phase~\cite{OKeeffe2017,OKeeffe2022ring,Yoon2022}. It would be interesting to ask the same
question when the edges are dynamic: how dense must a time-varying network be, at every instant or on
average, before sync is the only stable state? States like $S_7$ are finely tuned to a fixed wiring,
and it is not clear whether they, or anything like them, survive once the wiring moves.

\section*{Methods}\label{sec:ai}
\subsection*{Use of AI tools}

AI agents were used throughout this work, to speed up exploratory computation and to provide
independent checks: Claude Fable~5.1 and Claude Opus~5.5 (Anthropic), through Claude Code;
GPT-6-sol and GPT-6-astra (OpenAI), through the Codex command-line interface; and Grok~4.7 (xAI),
through the Grok command-line interface. They designed and ran the searches that found the $S_7$ family;
wrote both verification programs (the second independently, by a separate agent) and the negative
controls; derived the realization lemma, the limiting model, the closed form of $\mu_0$ and the proof
of Theorem~\ref{thm:V0}; audited one another's arguments and certificates; and drafted the
manuscript. A separate agent (GPT-6-astra), given only the problem statement and Canale's paper, ran an
independent search and did not find $S_7$. The author chose the problem, set the standard of
evidence (exact or interval arithmetic, independent re-verification, adversarial audits), decided
which results to trust, and reviewed and edited the text. The author takes full responsibility for
the content.

\section*{Data availability}
The Supplementary Material contains:
\begin{itemize}
\item the exact data for the template of Proposition~\ref{thm:main} (\texttt{certificate.json});
\item the exact data for the member at $h=0.002$ drawn in Figs.~\ref{fig:wheels}c and~\ref{fig:mechanism} and for the satellite state mentioned
in Sec.~\ref{sec:discussion};
\item a short note on the numerical searches.
\end{itemize}
The two verification programs and their negative controls, the scripts that check the symbolic
identities and interval bounds of Appendix~\ref{app:proof} (including the construction that
reproduces the template of Proposition~\ref{thm:main} exactly, and the interval computation behind
the explicit threshold $h\le1/8$), and the logs of all these runs are available from the author on
request.

\section*{Author Declarations}
\subsection*{Conflict of Interest}
The author has no conflicts to disclose.

\appendix

\section{Proof of Theorem~\ref{thm:V0}}\label{app:proof}

This appendix gives the complete proof. Each step is either argued here or is a finite computation
done by a script (available from the author on request); we name the
script next to each computation. In this appendix $h$ is the phase splitting, and $\ell$ is the
blow-up parameter of Lemma~\ref{lem:real}.

\subsection{The ansatz}

We use the seven clusters of Table~\ref{tab:S7limit}, in the order (Am, Ap, Bm, Bp, Cm, Cp, W), with
sizes $x=(a,a,a,a,a,a,x_W)$, $a=1/(7+q)$, $x_W=1-6a$, and phases $\varphi_c=j_c\pi/2+hs_c$, where
$j=(0,0,1,1,2,2,3)$ and $s=(-1,1,-w,w,-1,1,0)$. The densities are
\begin{equation}
P(h)=p+hp'+\Delta(h),
\label{eq:Pansatz}
\end{equation}
where $p$ is the link pattern of Table~\ref{tab:S7limit} and $p'$ is the explicit first-order
correction below. It depends on two further parameters $u$ and $\kappa$. With
$A_1=w(w+1)-u/2$ and $A_2=u/2-w(w-1)$,
\begin{align*}
p'_{\rm Am,Bm}&=p'_{\rm Bp,Cp}=A_1-\kappa,\\
p'_{\rm Ap,Bm}&=p'_{\rm Bp,Cm}=A_2-\kappa,\\
p'_{\rm Ap,Bp}&=p'_{\rm Bm,Cm}=p'_{\rm Am,Bp}=p'_{\rm Bm,Cp}=-\kappa,\\
p'_{\rm Am,W}&=p'_{\rm Cp,W}=\tfrac{a}{x_W}(A_1+2-2\kappa),\\
p'_{\rm Ap,W}&=p'_{\rm Cm,W}=\tfrac{a}{x_W}(A_2-2-2\kappa),
\end{align*}
and $p'=0$ on all other pairs. The parameter $\kappa$ multiplies an exact null direction of the
first-order equilibrium equations, a uniform thinning of all neighbor links; $u$ sets the
first-order cut weights. The correction $\Delta(h)$ is supported on the six pairs of the path
Am--Bm--Cm--W--Cp--Bp--Ap, all of which are neighbor pairs with $p=1$. We fix
\[
\kappa=\tfrac{19}{10},\quad w=w^*,\quad u=u_{13}(q,w),\quad q=q^*-t,
\]
with $0<t\le0.05$,
and $u_{13}$ defined in Eq.~\eqref{eq:u13}. The value of $\kappa$ is a convenient choice: it makes
$p'<0$ on every neighbor pair with $p=1$ (fact (C4) below), so that these densities stay below $1$ at
small $h$.

\subsection{Exact identities}

Let $\mathcal L$ be the cluster-mode form~\eqref{eq:L} and $F$ the equilibrium map~\eqref{eq:F}. The
following hold identically in $(q,w,u,\kappa)$, given $a=1/(7+q)$ (scripts \texttt{sym\_limit.py} and \texttt{qhat\_basis.py},
exact symbolic arithmetic; every residual is $0$).
\begin{itemize}
\item[(I1)] $F(p)=0$ at $h=0$, given $a=1/(7+q)$.
\item[(I2)] The order-$h$ coefficient of $F(p+hp')$ vanishes.
\item[(I3)] For each of the four cuts between neighboring sheets, the order-$h$ coefficient of the
cut weight $\sum x_cx_dp_{cd}(s_c-s_d)$ vanishes.
\item[(I4)] Let $\eta=(\eta_A,\eta_B,\eta_C,0)$ rotate whole sheets (W fixed as gauge), let
$\zeta=(\zeta_A,\zeta_B,\zeta_C)$ rotate the two halves of each split cluster against each other, and
put $T_h(\eta,\zeta)=\Pi\eta+hz(\zeta)$, where $\Pi\eta$ gives both halves on sheet $j$ the value
$\eta_j$ and $z(\zeta)=(-\zeta_A,\zeta_A,-\zeta_B,\zeta_B,-\zeta_C,\zeta_C,0)/2$. With $P=p+hp'$, the
function $h^{-2}\mathcal L(T_h(\eta,\zeta))$ has no negative powers of $h$. Its value $G_0$ at $h=0$ has
$\zeta$-block $a^2I$, and its Schur complement onto $\eta$ is $a^2\hat Q$, where, with
$d=(\eta_A-\eta_B,\eta_B-\eta_C,\eta_C,-\eta_A)$,
\begin{multline}
\hat Q=u(d_0^2+d_1^2)+\bigl(\tfrac E2-u\bigr)(d_2^2+d_3^2)-(\alpha d_0-\rho d_3)^2\\
-\beta^2(d_0+d_1)^2-(\alpha d_1-\rho d_2)^2,
\label{eq:Qhat}
\end{multline}
$\alpha=(w-1)-q(1+w)$, $\beta=(1-w)-q(1+w)$, $\rho=1+q$ and $E=4(2+w^2)$.
\item[(I5)] In the basis $(1,0,-1)$, $(0,1,0)$, $(1,0,1)$ of $(d_0,d_1,d_2)$, the matrix of $\hat Q$ is
\[
\begin{pmatrix}f&e&b_{13}\\e&f&b_{23}\\b_{13}&b_{23}&b_{33}\end{pmatrix},
\]
where
\begin{align*}
f&=-2q^2w^2-4q^2w-3q^2-2q+4w+1,\\
e&=-q^2w^2+q^2-2qw^2+6q-w^2+1,\\
b_{33}&=-2q^2w^2-3q^2-2q-4u+8w^2+17,
\end{align*}
$b_{13}+b_{23}=f+e$, and $b_{13}=2(u-u_{13})$ with
\begin{equation}
u_{13}=\tfrac12\bigl(2q^2w^2+2q^2w-q^2-6q+4w^2-2w+3\bigr).
\label{eq:u13}
\end{equation}
\end{itemize}
At $u=u_{13}$ the leading minors of this matrix are $f$, $(f+e)(f-e)$ and
$(f+e)\bigl[b_{33}(f-e)-f(f+e)\bigr]$. By Sylvester's criterion,
\begin{equation}
\hat Q\succ0\iff f\pm e>0\ \text{and}\ b_{33}(f-e)>f(f+e).
\label{eq:sylv}
\end{equation}

\subsection{Interval and algebraic facts}

The following are proved by \texttt{corner\_cert.py} (interval arithmetic with outward rounding at 40
digits, and exact algebra).
\begin{itemize}
\item[(C1)] The Krawczyk test shows that $f=e=0$ has a unique solution $(q^*,w^*)$ in a box of radius
$10^{-30}$ about $q^*=0.65056916694479880261\ldots$, $w^*=1.39827813402046117964\ldots$.
\item[(C2)] The resultant of $f$ and $e$ with respect to $w$ is exactly the polynomial~\eqref{eq:poly}. It
is irreducible over $\mathbb Q$ and has exactly one root in $(0,1)$, which is therefore $q^*$. The number
$\mu_0=(4+2q^*)/(7+q^*)$ is the unique root in $(0.69,0.70)$ of the irreducible polynomial
\begin{multline*}
312976V^8-1211776V^7+2547752V^6\\
-3537088V^5+3260545V^4-1949592V^3\\
+724012V^2-151504V+13636.
\end{multline*}
\item[(C3)] Along the curve, $f(q^*,w^*)=e(q^*,w^*)=0$ exactly, so $f+e=t\,r_+(t)$, $f-e=t\,r_-(t)$ and
$b_{33}(f-e)-f(f+e)=t\,r_3(t)$ with polynomials $r_\pm,r_3$. These identities use the exact corner equations $f(q^*,w^*)=e(q^*,w^*)=0$;
they are not asserted for arbitrary corner parameters. Interval enclosures on $200$ subintervals of
$[0,0.05]$ give $r_+\ge16.74$, $r_-\ge18.53$ and $r_3\ge281.5$. By Eq.~\eqref{eq:sylv}, $\hat Q\succ0$
for every $t\in(0,0.05]$, with smallest eigenvalue at least $ct$ for some $c>0$.
\item[(C4)] On $t\in[0,0.05]$, $A_1-\kappa\le-0.165$, $A_1+2-2\kappa\le-0.065$ and
$A_2-2-2\kappa\le-4.72$. Together with $-\kappa<0$, this shows that $p'<0$ on every neighbor pair with
$p=1$. Also $q\in(0.60,0.651)$.
\end{itemize}

\subsection{Lifting to a finite splitting}

\begin{lemma}\label{lem:lift}
Fix $t\in(0,0.05]$. There is $h_1(t)>0$ such that for every $0<h<h_1(t)$:
\begin{enumerate}
\item[(a)] there is a unique small $\Delta(h)=O(h^2)$, supported on the path, with $F(P(h))=0$ exactly;
\item[(b)] $0\le P(h)\le1$ entrywise;
\item[(c)] $c_a>0$ for every cluster, and $\mathcal L(y)>0$ unless $y$ is a rotation;
\item[(d)] the minimum degree is at least $(4+2q)/(7+q)-C(t)h$.
\end{enumerate}
\end{lemma}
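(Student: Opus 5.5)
We prove the four parts in order. Part (a) is the crux; (b)--(d) then follow by continuity from the exact identities (I1)--(I5) and the facts (C3)--(C4).

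\emph{(a) Exact equilibrium.} Expand $F(p+hp'+\Delta)$ in $h$ and $\Delta$. By (I1) and (I2), $F(p+hp')=O(h^2)$. The map $\Delta\mapsto F$ is linear in $\Delta$, because $F$ is linear in the densities. So we must solve
\[
M(h)\Delta=-F(p+hp'),\qquad M(h)_{c,(de)}=\partial F_c/\partial P_{de},
\]
where $\Delta$ ranges over the six path pairs. $F$ has seven components, but they satisfy the identity $\sum_c x_cF_c=0$, since $\sin$ is odd. Hence the range of $F$ lies in a six-dimensional subspace, and it suffices that the $6\times6$ map from path densities to that subspace be invertible.

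At $h=0$ the entries of $M$ are of the form $x_cx_d\sin(\varphi_c-\varphi_d)=\pm x_cx_d$ for neighbor pairs. The path Am--Bm--Cm--W--Cp--Bp--Ap is a spanning tree of the seven clusters. A tree has the property that a flow on its edges is determined uniquely by its divergence at the vertices. So $M(0)$ is the signed incidence matrix of a tree weighted by nonzero numbers, restricted to the mean-zero subspace, and it is invertible.

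This gives $\Delta(h)=-M(h)^{-1}F(p+hp')=O(h^2)$ for small $h$. Uniqueness among small $\Delta$ is automatic, since the system is linear.

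\emph{(b) Densities in $[0,1]$.} Entries with $p=0$ have $p'=0$ and no $\Delta$, so they stay exactly $0$. Entries with $p=q$ lie in $(0.60,0.651)$ by (C4), and an $O(h)$ perturbation keeps them inside $(0,1)$. Entries with $p=1$ on neighbor pairs have $p'<0$ by (C4), so $1+hp'+O(h^2)<1$ for small $h$, and they stay above $0$. Diagonal entries are unchanged.

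\emph{(c) Strictness.} The condition $c_a>0$ holds by continuity. At $h=0$, $c_a$ equals the self-cluster contribution plus the split-sibling contribution, which is at least $\min(2a,1-6a)>0$, because cross-neighbor cosines vanish and opposite clusters are not linked.

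For $\mathcal L$, decompose $y$ (modulo rotation) as $y=\Pi\eta+z(\zeta)$ with $\zeta$ free, and substitute $\zeta\to h\zeta$, which is a change of basis. By (I4), $h^{-2}\mathcal L(T_h(\eta,\zeta))=G_0(\eta,\zeta)+O(h)$. Two points need checking here.
\begin{itemize}
\item The correction $\Delta=O(h^2)$ changes $\mathcal L$ only by $O(h^2)\cdot O(h)$ after rescaling. On neighbor pairs the cosines are $O(h)$, so its contribution after dividing by $h^2$ is $O(h)$.
\item Within-cluster pairs have cosine near $1$, giving weight $a^2\cdot O(1)$ on the $\zeta$ directions. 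That is the $a^2I$ block, so no negative powers of $h$ appear.
\end{itemize}

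$G_0$ has $\zeta$-block $a^2I\succ0$ and Schur complement $a^2\hat Q$, which is positive definite with minimum eigenvalue at least $ct$ by (C3). Therefore $G_0\succ0$ with gap $\gamma(t)>0$. Since the quadratic forms depend continuously on $h$ on a compact unit sphere, $h^{-2}\mathcal L\circ T_h\succ \gamma/2$ for $h<h_1(t)$. Because $T_h$ is invertible for $h>0$, this gives $\mathcal L(y)>0$ for all non-rotations $y$.

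\emph{(d) Degree.} The degrees $d_c=\sum_d P_{cd}x_d$ are affine in $P$, and equal $(4+2q)/(7+q)$ or more at $h=0$. Hence $d_c\ge(4+2q)/(7+q)-C(t)h$, with $C$ bounded by $\sum|p'|+O(h)$.

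The main obstacle is the uniformity in (c): the $O(h)$ remainder in the expansion must be controlled against a gap of size only $\sim t$. This is exactly why $h_1$ depends on $t$. The step needs care only in verifying that the $\Delta$-terms and the $h^3$ terms of the cosine expansions enter $h^{-2}\mathcal L$ at order $h$, which follows from the cosine factors on neighbor pairs being $O(h)$.
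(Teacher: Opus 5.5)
Your proof is correct and follows the paper's own argument essentially step for step. For (a) you use invertibility of the weighted incidence map of the spanning path onto $\{v:\sum_c x_cv_c=0\}$; for (b), the sign facts (C4); for (c), the expansion $h^{-2}\mathcal L\circ T_h=G_0+O(h)$, with the $\Delta$-terms entering only at $O(h)$ and $G_0\succ0$ via its $a^2I$ block and the Schur complement $a^2\hat Q$ from (C3); and for (d), continuity of the affine degrees. One small clarification: the only term that could produce a negative power of $h$ is the $h^{-1}$ contribution of neighbor-sheet pairs to the $\eta\eta$ block, and it is cancelled by the vanishing first-order cut weights (I3), not by anything about the within-cluster pairs you mention; since you take the expansion directly from (I4), your argument is unaffected.
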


\begin{proof}
(a) $F$ is affine in $P$, and $\sum_cx_cF_c\equiv0$. Let $J(h)$ be the linear map from the six path
entries of $\Delta$ to the six-dimensional space $\{v:\sum_cx_cv_c=0\}$. At $h=0$ each path pair $(c,d)$
contributes $\pm x_d\sin(\pm\pi/2)$ to $F_c$ and the opposite amount, weighted by $x_c$, to $F_d$, so
$J(0)$ is a weighted incidence map of a spanning path with nonzero weights and is invertible. By
continuity $J(h)$ is invertible for small $h$. By (I1) and (I2), $F(p+hp')=O(h^2)$, and it is analytic
in $h$. Hence $\Delta(h)=-J(h)^{-1}F(p+hp')$ is analytic and $O(h^2)$, and $F(P(h))=0$.

(b) Entries with $p=0$ lie on opposite sheets and stay $0$. Entries with $p=q$ equal $q+O(h)\in(0,1)$.
Neighbor-sheet entries with $p=1$ equal $1+hp'+O(h^2)$ and lie in $(0,1)$ for small positive $h$ by
(C4); same-sheet entries remain exactly $1$.

(c) As $h\to0$, $c_c$ tends to the same-sheet sum $\sum_dp_{cd}x_d$, since neighbor pairs have
$\cos(\varphi_c-\varphi_d)=O(h)$ and opposite pairs have $P=0$; the limit is $2a$ for the halves and
$x_W$ for W. For the cluster modes, $T_h$ is a bijection onto a complement of the rotation direction,
so it suffices that $G(h)=h^{-2}\mathcal L\circ T_h$ is positive definite. The part of $G(h)$ coming from
$p+hp'$ is analytic in $h$ and equals $G_0$ at $h=0$, by (I3) and (I4). The part coming from $\Delta$ is
$O(h)$: $\Delta$ lives on neighbor pairs, where $\cos(\varphi_c-\varphi_d)=O(h)$, and the corresponding
differences of $T_h$ are $O(1)$, so these terms of $\mathcal L$ are $O(h^3)$. (On opposite pairs, where
$|\cos|\approx1$, an $O(h^2)$ correction would enter $G$ at order $1$; the path avoids them.) Hence
$G(h)=G_0+O(h)$. Finally $G_0$ is positive definite: its $\zeta$-block is $a^2I$ and its Schur
complement is $a^2\hat Q\succ0$ by (C3).

(d) Degrees are continuous in $h$. At $h=0$ the halves on sheets $0$ and $2$ have degree
$2a+a(1+q)+x_W=(4+2q)/(7+q)$; the halves on sheet $1$ have the same degree, $2a+2a(1+q)$, and W has
$x_W+4a=1-2a$, which is larger.
\end{proof}

\subsection{Rational data}

\begin{lemma}\label{lem:rational}
Let $(x,\varphi,P)$ be a strict template and $\epsilon>0$. There is a strict template
$(\tilde x,\tilde\varphi,\tilde P)$ with rational $\tilde x,\tilde P$ over an even common denominator
$D$, $\tilde P_{aa}\le1-1/D$, $|\mu(\tilde T)-\mu(T)|<\epsilon$, and $|\tilde\varphi_a-\varphi_a|<\epsilon$
for all $a$.
\end{lemma}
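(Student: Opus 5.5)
Strictness is an open condition, but it is open only once the phases are allowed to move, because perturbing $x$ and $P$ breaks the equilibrium $F=0$. So the plan is to restore equilibrium with the implicit function theorem and let openness handle the inequalities. Fix the gauge $\varphi_1=0$ and regard $F(x,\varphi,P)$ as a smooth map in $\varphi'=(\varphi_2,\dots,\varphi_m)$. Since $\sum_a x_aF_a\equiv0$, the image lies in the hyperplane $\{v:\sum_a x_a v_a=0\}$, which has dimension $m-1$.

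First I compute the Jacobian of $F$ with respect to $\varphi$. Differentiating Eq.~\eqref{eq:F} gives $\partial F_a/\partial\varphi_b=-x_bP_{ab}\cos(\varphi_a-\varphi_b)$ for $b\ne a$, with diagonal entry the negative of the sum of these. Up to the diagonal scaling by $x$, this matrix is exactly the weighted Laplacian whose quadratic form is $\mathcal L$ in Eq.~\eqref{eq:L}. Indeed, $\sum_a x_a y_a\,(\partial F/\partial\varphi\cdot y)_a=\mathcal L(y)$ up to sign, by symmetry of $x_aP_{ab}x_b\cos(\varphi_a-\varphi_b)$.

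Strictness says $\mathcal L$ is positive definite off the rotation direction. Hence this Laplacian has kernel exactly $\bm1$. Restricted to $\varphi'$, it maps isomorphically onto the weighted-zero-sum hyperplane. The only bookkeeping needed is that this hyperplane moves with $x$; I handle that by writing $F$ in coordinates $F_2,\dots,F_m$, since $F_1$ is then determined. The implicit function theorem now gives a continuous function $\tilde\varphi(\tilde x,\tilde P)$ near $(x,P)$ with $F(\tilde x,\tilde\varphi,\tilde P)=0$ and $\tilde\varphi\to\varphi$.

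Next I choose rational data. For even $D$ large, take $\tilde P_{ab}$ to be the nearest multiple of $1/D$ in $[0,1]$, with symmetry kept and with $\tilde P_{aa}\le 1-1/D$ (round diagonals down if needed). Take $\tilde x_a=X_a/D$ positive with $\sum_a X_a=D$, which is possible by adjusting one coordinate. All errors are $O(1/D)$. The quantities $c_a$, the restricted form $\mathcal L$ (for example its minimal eigenvalue on the complement of $\bm1$, or positivity of the gauge-fixed matrix), and $d_a$ are continuous in $(x,\varphi,P)$. The strict inequalities $c_a>0$ and $\mathcal L\succ0$ therefore persist for $D$ large. $\mu(T)=\min_a d_a$ is continuous, so $|\mu(\tilde T)-\mu(T)|<\epsilon$, and $|\tilde\varphi_a-\varphi_a|<\epsilon$ follows from continuity of the implicit function.

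The main obstacle is the invertibility step, specifically matching the Jacobian to $\mathcal L$ with correct weights and handling the $x$-dependent target hyperplane. Everything else is routine continuity. A secondary subtlety is the boundary case of densities at $0$ or $1$ and the diagonal constraint. For these I lower the diagonal by at most $1/D$; this does not enter $F$ or $\mathcal L$, and it changes $c_a$ and $d_a$ only by $O(1/D)$, so it is harmless.
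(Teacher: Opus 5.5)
Your proposal is correct and follows essentially the same route as the paper. You fix $\varphi_1=0$ and solve $(F_2,\dots,F_m)=0$ by the implicit function theorem, using that $\partial F/\partial\varphi=X^{-1}L$ is invertible on the gauge-fixed subspace by strictness, with $F_1=0$ recovered from $\sum_a x_aF_a\equiv0$; you then round to rational data over a large even $D$, rounding diagonals down, and conclude by continuity of $c_a$, $\mathcal L$ and $\mu$ (the identity $\sum_a x_ay_a(\partial F/\partial\varphi\,y)_a=\mathcal L(y)$ in fact holds exactly, with no sign ambiguity).
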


\begin{proof}
Fix $\varphi_1=0$ and consider $\Psi=(F_2,\dots,F_m)$ as a function of $(x,P,\varphi_2,\dots,\varphi_m)$.
Since $\sum_ax_aF_a\equiv0$ and $x_1>0$, $\Psi=0$ implies $F_1=0$. The derivative of $F$ with respect to
the phases is $X^{-1}L$, where $L$ is the matrix of $\mathcal L$ (the weighted Laplacian with weights $x_ax_bP_{ab}\cos(\varphi_a-\varphi_b)$), and strictness makes $L$ with its first
row and column deleted positive definite. By the implicit function theorem, the phases solving $\Psi=0$
depend continuously on $(x,P)$ near the given data. For an even integer $D$ tending to infinity,
choose positive rational sizes $\tilde x_a=X_a/D$ summing to one, and symmetric rational densities
$\tilde P_{ab}=Q_{ab}/D$ in $[0,1]$, all tending to the original data. For each diagonal choose
$Q_{aa}=\min\{\lfloor DP_{aa}\rfloor,D-1\}$. Then $\tilde P_{aa}\le1-1/D$ and
$|\tilde P_{aa}-P_{aa}|\le1/D$, including when $P_{aa}=1$. Diagonal densities enter neither $F$ nor
$\mathcal L$. The nearby phases supplied by the implicit function theorem solve equilibrium exactly.
Since $c_a>0$, positive definiteness of $L$ off rotations, and $\mu$ are continuous, strictness persists
and the degree and phase errors are below $\epsilon$ for sufficiently large $D$.
\end{proof}

\subsection{Conclusion}

\begin{proof}[Proof of Theorem~\ref{thm:V0}]
Let $\epsilon>0$. Choose $t\in(0,0.05]$ with $(4+2q)/(7+q)>\mu_0-\epsilon/3$, where $q=q^*-t$. For this
$t$, Lemma~\ref{lem:lift} gives a strict template with $\mu>\mu_0-2\epsilon/3$ once $h$ is small
enough. Lemma~\ref{lem:rational} gives a strict rational template with $\mu>\mu_0-\epsilon$, even
denominator and diagonal densities at most $1-1/D$. Lemma~\ref{lem:real} turns it into simple graphs
with $\delta=\mu n$ and a strictly stable fixed point. It is not sync: the phases of the ansatz are
distinct for $h>0$, and taking $\epsilon$ in Lemma~\ref{lem:rational} below half their minimum
separation keeps them distinct. So $\mu_c>\mu_0-\epsilon$ for every $\epsilon>0$.
\end{proof}

The order of choices matters: $t$ is fixed first, and $h$ is then chosen small for that $t$, because the
gap of $\hat Q$ is proportional to $t$. No explicit $h_1(t)$ is needed for the proof.

\emph{Remark (an explicit threshold).} In fact $h_1=1/8$ works for every $t\in[0,0.05]$, including $t=0$.
The reason is that the order-$h$ correction to $G(h)$ is itself positive definite on the two directions
in which $\hat Q$ degenerates as $t\to0$, so the gap is of order $t+h$ rather than $t$. We have checked
this with a separate interval computation, using Taylor models in $h$ with rigorous remainders on $400$
subintervals of $t$ (script available from the author on request). It also
gives $|\Delta|\le1.135\,h^2$ and a minimum degree of at least $(4+2q)/(7+q)-0.9956\,h$. The proof of
Theorem~\ref{thm:V0} does not rely on this computation.

The script \texttt{construct.py}
carries out this construction for given $(t,h)$; with $t=10^{-5}$ and $h=10^{-6}$ it produces the template
of Proposition~\ref{thm:main}.

\section{Proofs of Lemmas~\ref{lem:cert} and~\ref{lem:real}}\label{app:lemmas}

\begin{proof}[Proof of Lemma~\ref{lem:cert} (computer-assisted)]
The proof checks in \texttt{certify.py} use exact rational arithmetic and outward-rounded interval
arithmetic at 50 digits~\cite{mpmath}; ordinary multiprecision calculations supply the proposed center
and preconditioner and explicitly labeled diagnostics.
\begin{enumerate}
\item The Krawczyk test proves that Eqs.~\eqref{eq:F} for $a=2,\dots,7$ have a unique solution in the
box, with $\varphi_1=0$.
\item The remaining equation, $F_1=0$, then holds automatically: $\sum_ax_aF_a=0$ identically, because
the pulls are antisymmetric, and $x_1>0$.
\item The interval enclosures of the $c_a$ over the box are positive.
\item Write the cluster-mode form~\eqref{eq:L} as $\mathcal L(y)=y^{\mathsf T}Ly$, and let
$M=X^{-1/2}LX^{-1/2}$ with $X=\mathrm{diag}(x)$. For any phases, the rotation mode $v=(\sqrt{x_a})$
satisfies $Mv=0$ exactly.
\item The interval Cholesky factorization of $M+vv^{\mathsf T}$ succeeds over the whole box. So $M$ is
positive definite on the complement of $v$, which is exactly the statement that $\mathcal L(y)>0$
unless $y$ is a rotation.
\item The degrees $d_a$ are computed in exact rational arithmetic.
\item Interval enclosures show that the seven phases are pairwise distinct mod $2\pi$ over the whole
box (minimum separation at least $10^{-6}$), so the state is not sync.
\end{enumerate}
A second, independently written verifier confirms the result. It uses its own rational interval
arithmetic, drops a different equation in item 2, and replaces item 5 by exact leading-minor tests.
\end{proof}

\begin{proof}[Proof of Lemma~\ref{lem:real}]
Let $g=D\ell$, which is even.

\emph{Vertices.} Take vertices $(a,i,v)$ with cluster label $a$, copy index $0\le i<X_a$, and ring
position $v\in\Z_g$. So $n=Dg$, and cluster $a$ has $n_a=X_ag=x_an$ vertices.

\emph{Shift sets.} Choose:
\begin{itemize}
\item for $a<b$, a set $S_{ab}\subset\Z_g$ with $|S_{ab}|=P_{ab}g$, and set $S_{ba}=-S_{ab}$;
\item for each $a$, a set $T_a\subset\Z_g$ with $|T_a|=P_{aa}g$;
\item for each $a$, a set $Z_a=-Z_a\subset\Z_g\setminus\{0\}$ with $|Z_a|=P_{aa}g\le g-1$. Such
symmetric loop-free sets exist in every size: use pairs $\{s,-s\}$, and add the element $g/2$ when the
size is odd.
\end{itemize}

\emph{Edges.} Join
\begin{itemize}
\item $(a,i,v)\sim(b,j,w)$ for $a\neq b$ when $w-v\in S_{ab}$;
\item $(a,i,v)\sim(a,j,w)$ for $i<j$ when $w-v\in T_a$;
\item $(a,i,v)\sim(a,i,w)$ when $w-v\in Z_a$.
\end{itemize}
Each relation is symmetric and loop-free, so the graph is simple.

\emph{Degrees.} A vertex of cluster $a$ has $X_b|S_{ab}|=P_{ab}n_b$ neighbors in cluster $b\neq a$,
and $(X_a-1)|T_a|+|Z_a|=P_{aa}n_a$ neighbors in its own cluster. Its degree is therefore exactly
$n\,d_a$, and $\delta=\mu(T)n$. No rounding is needed, because $g$ is a multiple of $D$.

\emph{Fixed point.} Set $\theta_{(a,i,v)}=\varphi_a$. At a vertex of cluster $a$, the right-hand side
of Eq.~\eqref{eq:kuramoto} equals $-nF_a=0$. The certified phases are used unchanged.

\emph{Jacobian.} Rotating every ring by one step, $v\mapsto v+1$, is a symmetry of the network that
leaves $\theta$ fixed. So $H$ commutes with it and splits into Fourier blocks $H_\omega$,
$\omega\in\Z_g$, each of size $D\times D$ and indexed by $(a,i)$. With
$\hat S(\omega)=\sum_{s\in S}e^{2\pi i\omega s/g}$, the entries of $H_\omega$ are:
\begin{itemize}
\item $nc_a-\hat Z_a(\omega)$ on the diagonal;
\item $-\hat T_a(\omega)$ at $((a,i),(a,j))$ for $i<j$, and its conjugate for $i>j$;
\item $-\cos(\varphi_a-\varphi_b)\hat S_{ab}(\omega)$ at $((a,i),(b,j))$ for $a\ne b$.
\end{itemize}

\emph{The uniform mode, $\omega=0$.} Split $\mathbb{C}^D$ into vectors that are constant on each
cluster and vectors that sum to zero on each cluster; both subspaces are invariant.
\begin{itemize}
\item On the constant vectors, taking the value $y_a$ on cluster $a$, the quadratic form of $H_0$ is
$nD\,\mathcal L(y)=\tfrac{n^2}{g}\mathcal L(y)$, which is positive except for the rotation mode.
\item On the zero-sum vectors, $H_0$ acts on cluster $a$ as $(nc_a-|Z_a|+|T_a|)\,\mathrm{Id}$. Since
$|Z_a|=|T_a|=P_{aa}g$, this is exactly $nc_a\,\mathrm{Id}$, which is positive because $c_a>0$.
\end{itemize}

Since the discrete Fourier transform on $\Z_g$ is unitary, the spectrum of $H$ is the union of the
spectra of the Hermitian blocks $H_\omega$. So $H$ is positive definite off the rotation mode if $H_0$
is positive definite off the rotation mode and every $H_\omega$ with $\omega\ne0$ is positive
definite.

\emph{The other modes, $\omega\neq0$.} By Gershgorin's theorem, $H_\omega$ is positive definite if
every row satisfies
\[
nc_a-|\hat Z_a|>(X_a-1)|\hat T_a|+\sum_{b\ne a}X_b|\hat S_{ab}|.
\]
Suppose every nontrivial Fourier coefficient is at most $\beta$ in modulus. Since $\sum_bX_b=D$ and
$n=Dg$, the condition then follows from $c_ag>\beta$.

\emph{Random shift sets work.} Draw each $S_{ab}$ ($a<b$) and $T_a$ uniformly among subsets of the
right size.
\begin{itemize}
\item For $\omega\neq0$, the numbers $\cos(2\pi\omega t/g)$, $t\in\Z_g$, have mean $0$ and range $2$,
and so do the $\sin(2\pi\omega t/g)$.
\item Hoeffding's inequality for sampling without replacement~\cite{Hoeffding1963} therefore gives
$\Pr(|\mathrm{Re}\,\hat S(\omega)|\ge\lambda)\le2e^{-\lambda^2/(2g)}$, and likewise for the imaginary
part.
\item For $Z_a$, draw half of it, $H_a$, of size $k\le g/2-1$, uniformly from $\{1,\dots,g/2-1\}$, and
set $Z_a=H_a\cup(-H_a)$, adding $g/2$ if the size is odd. Then $\hat Z_a(\omega)$ is real:
\[
\hat Z_a(\omega)=2\Sigma_a+\epsilon,\qquad \Sigma_a=\sum_{s\in H_a}\cos\frac{2\pi\omega s}{g},\quad
|\epsilon|\le1.
\]
If $k=0$, then $\Sigma_a=0$ deterministically and $|\hat Z_a|\le1$.
For $k>0$, the population $\{\cos(2\pi\omega s/g):1\le s\le g/2-1\}$ lies in $[-1,1]$ and sums to
$-(1+(-1)^\omega)/2$, so $|\mathbb E\,\Sigma_a|\le1$. Hoeffding's inequality gives
$\Pr(|\Sigma_a-\mathbb E\,\Sigma_a|\ge\lambda)\le2e^{-\lambda^2/(2k)}\le2e^{-\lambda^2/(2g)}$. Outside this
event, $|\hat Z_a(\omega)|\le2(\lambda+1)+1=2\lambda+3$.
\item There are $N$ random sets, $g-1$ modes, and real and imaginary parts. A union bound over all of
them fails with probability less than $4gNe^{-\lambda^2/(2g)}=1/2$ at $\lambda^2=2g\ln(8gN)$.
\end{itemize}
Outside all these events, $|\hat S_{ab}|,|\hat T_a|\le\sqrt2\lambda$ and $|\hat Z_a|\le2\lambda+3$.
So with positive probability $\beta\le2\lambda+3<c_{\min}g$. Hence $H$ is positive definite away from
the rotation mode, and the cluster state is strictly stable. If the $\varphi_a$ are distinct, it is
not sync.
\end{proof}

\end{document}